\documentclass[11pt]{article}

\usepackage{geometry}
\usepackage[dvipdfmx,hiresbb]{graphicx}
\usepackage{amsmath,amssymb,amsthm,mathtools,mathrsfs,bm}
\usepackage{booktabs,microtype}
\usepackage[font=small]{caption}
\usepackage{enumitem}
\setlist[itemize]{itemsep=2pt, topsep=4pt}
\usepackage[hang,flushmargin]{footmisc}
\usepackage{ascmac,fancybox,framed,color}
\usepackage{comment,lastpage,fancyhdr}
\usepackage{natbib}
\usepackage[colorlinks=true,linkcolor=blue,citecolor=blue,urlcolor=blue]{hyperref}

\theoremstyle{plain}
\newtheorem{thm}{Theorem}[section]
\newtheorem{cor}[thm]{Corollary}
\newtheorem{lem}[thm]{Lemma}

\theoremstyle{definition}
\newtheorem{ass}{Assumption}
\theoremstyle{remark}
\newtheorem{rem}{Remark}

\newcommand{\R}{\mathbb{R}}
\newcommand{\N}{\mathbb{N}}

\newcommand{\rk}{\mathrm{rank}}
\newcommand{\tr}{\mathrm{Tr}}
\newcommand{\diag}{\mathrm{diag}}

\newcommand{\E}{\mathrm{E}}
\newcommand{\Var}{\mathrm{Var}}

\newcommand{\Hc}{\mathcal{H}}

\title{Hybrid principal component analysis in\\ multivariate allometric regression}
\author{Koji Tsukuda\thanks{Faculty of Mathematics, Kyushu University, Fukuoka, Japan.} \and Shun Matsuura\thanks{Faculty of Science and Technology, Keio University, Kanagawa, Japan}}

\begin{document}
\maketitle

\begin{abstract}
\noindent
In biological data from allometry studies, the largest eigenvalue is typically dominant, and the gaps between minor eigenvalues are often narrow. 
Such proximity among small minor eigenvalues can lead to instability in statistics based on their corresponding eigenvectors.
This study derives the asymptotic normality of the hybrid principal component analysis estimator of the leading principal eigenvector in the multivariate allometric regression model and proposes a test based on a geometric statistic for the parallelism between the regression direction and the principal component direction  that avoids this instability.
Using the hybrid principal component analysis framework, we analyze the well-known painted turtle carapace data and confirm previously reported results on the allometric extension relationship between female and male turtles.
\end{abstract}

\noindent
keywords: allometric extension; reduced-rank regression; weak identifiability; weighted chi-squared test

\newpage
\section{Introduction}\label{sec:1}
Data obtained from biological measurements often exhibit specific structures that reflect underlying biological mechanisms.
In particular, when analyzing size and shape, it is natural to assume that the direction of growth coincides with the direction of dominant variation, and this type of alignment is commonly assumed in the field of allometry.
This alignment is referred to as ``allometric extension'' \citep{RefBFN99,RefH06}.
Using the well-known painted turtle carapace data \citep{RefJM60}, \citet{RefBFN99} examined the relationship between the two subgroups of male and female turtles. 
They found that while the relationship of allometric extension held for a subset of two variables, it no longer held when all three variables were incorporated into the analysis.
In biological applications, however, the minor eigenvalues of the covariance matrix are often close to each other.
From a statistical perspective, such proximity is known to degrade the performance of inference procedures; see, for example, \citet{RefS03}.
This statistical challenge, together with the commonly held biological expectation that shape variations do not obscure size variation \citep[cf.][]{RefS86}, raises questions about the statistical conclusion of \citet{RefBFN99}.
Motivated by this, we develop a test within the multivariate reduced-rank regression framework and apply it to reexamine the allometric extension in the turtle carapace data.

In morphometrics and allometry studies, while two-dimensional data have long been the focus, extensions to higher-dimensional settings and statistical inferences on the direction of principal components have been developed through several approaches \citep{RefJ63,RefJ84,RefK96}.
Within multivariate allometry, it is commonly observed that the first principal component accounts for a substantial, often dominant, proportion of the total variation and is interpreted as a size-related direction; see, for example, \citet{RefS86,RefK16}.
The size-related direction represented by the first principal component is often associated with observed explanatory variables such as age, sex, and other biological covariates, providing a natural link between principal component analysis (PCA for short) and regression modeling.
Specifically, \citet{RefK22} systematically compares several methods, including PCA and multivariate linear regression, that provide different operational definitions of allometric directions in geometric morphometrics, and demonstrates that these directions do not necessarily coincide in the presence of residual variation.
Despite being widespread in practice, the assumption that the regression direction aligns with the leading principal component has rarely been subjected to formal statistical testing in regression settings. 
From a statistical viewpoint, this assumption is better regarded as a structural hypothesis rather than an a priori truth.
This perspective motivates the development of formal inferential procedures to assess this hypothesis, particularly in regression settings.

The allometric extension model, which assumes the coincidence of the direction of mean differences and the direction of maximal variation, is formulated by restricting the differences in population mean vectors to a one-dimensional space parallel to the leading principal eigenvectors of the population covariance matrices~\citep{RefF97,RefBFN99,RefH06}.
This model is positioned as a relaxation of the common principal components restriction on the one-dimensional CPC subspace model~\citep{RefFNP95}.
Based on this formulation, the allometric extension model and its regression variant proposed by \citet{RefTI06}, the multivariate allometric regression model, have been actively studied \citep{RefBFN99,RefKGT25,RefKHF08,RefMK14,RefS17,RefTM23,RefTM25}.
A distinctive feature of these models is that PCA incorporates mean differences, enabling the construction of a hybrid estimator.
Among these studies, for the two-sample problem, \citet{RefBFN99} utilized the asymptotic normality of the principal eigenvectors established by \cite{RefA63} to propose a Wald-type statistic for testing the coincidence of the leading eigenvectors of the two populations with the mean difference vector.
However, the corresponding testing problem has not been addressed in the context of multivariate allometric regression. 
Although \citet{RefTI06} proposed a bootstrap-based method and \citet{RefS17} developed likelihood ratio and score tests under a random explanatory variables setting, their methods are not based on the differences between the regression directions and the principal eigenvectors. 
Consequently, their approaches rely on a fundamentally different perspective from that of \citet{RefBFN99}.

To fill this gap, within the reduced-rank regression framework of rank one, this study proposes a direct geometric test for the parallelism between the regression direction and the first principal eigenvector. 
While this formulation shares the core philosophy of \cite{RefBFN99} by focusing on the direct geometric relationship between these directions, a straightforward extension of their Wald-type approach to the regression context suffers from a major methodological drawback; it requires the Moore--Penrose inverse of estimated matrices and thus relies on restrictive assumptions on the gaps among minor eigenvalues. 
To overcome this limitation, instead of the Wald-type formulation, we introduce a geometric test statistic that evaluates the Euclidean norm of the difference between the sample leading principal eigenvectors of two sum-of-squares matrices, which is equivalently formulated as an angle based on inner products. 
The asymptotic distribution of our statistic is derived as a weighted sum of chi-squared variables, avoiding the instability of eigenvector estimation under small minor eigenvalue gaps, a perspective that also underlies the test for partial common principal component subspaces of \citet{RefS03}.
Note that the model of partial common principal component subspaces considered in \cite{RefS99,RefS03} differs from the CPC subspace model considered in \cite{RefFNP95}.
While general reduced-rank regression~\citep[cf.][]{RefRVC22} focuses on imposing a low-rank structure on the regression coefficient matrix, the present study considers instead a structural constraint on the relationship between the regression direction and the principal component direction, and develops an inferential procedure for testing this relationship.
We show that the proposed test controls the asymptotic size and is consistent, and its finite-sample properties are examined through numerical experiments.
The performance of our method is illustrated through a reexamination of the classical painted turtle carapace data.

\section{Problem setting}\label{sec:2}
For $n$ specimens with $p$-dimensional observations denoted by $\bm{y}_1,\ldots,\bm{y}_n$, we consider the multivariate linear multiple regression model
\[
\bm{y}_i = \bm{\mu} + \bm{B}^\top \bm{x}_{n,i} + \bm{e}_i, \quad 
\E[\bm{e}_i] = \bm{0}_p, \quad
\Var[\bm{e}_i] = \bm{\Sigma}
\]
for $i=1,\ldots,n$, where $\bm{x}_{n,i}$ denotes the $q$-dimensional $(q<n)$ non-stochastic explanatory variable vector of the $i$-th specimen satisfying $\sum_{i=1}^n \bm{x}_{n,i} = \bm{0}_q$ and $\rk(\bm{X}_n)=q$ with
\[ \bm{X}_n = (\bm{x}_{n,1}, \ldots, \bm{x}_{n,n} )^\top. \]
Here, $\bm{\mu}$ is an unknown $p$-dimensional non-stochastic intercept vector; $\bm{B}$ is an unknown $q \times p$ non-stochastic coefficient matrix, for which $\rk(\bm{B}) = 1$ will be assumed later.
The unknown covariance matrix $\bm{\Sigma}$ is a $p\times p$ positive-definite matrix whose largest eigenvalue $\lambda_1$ is a unique root.
We do not assume that the minor eigenvalues $\lambda_2 \geq \lambda_3 \geq \cdots \geq \lambda_p > 0$ of $\bm{\Sigma}$ are distinct, and consider the spectral decomposition
\[ \bm{\Sigma} = \bm{\Gamma} \bm{\Lambda} \bm{\Gamma}^\top = (\bm{\gamma}_1,\ldots,\bm{\gamma}_p) \diag(\lambda_1,\ldots,\lambda_p) (\bm{\gamma}_1,\ldots,\bm{\gamma}_p)^\top , \]
where $\bm{\Gamma}$ is an orthogonal matrix.
This assumption is practically relevant, as minor eigenvalues are often close in biological data, leading to unstable eigenvector estimation.
Although the submatrix $(\bm{\gamma}_2,\ldots,\bm{\gamma}_p)$ is not uniquely determined, we fix a specific choice of the decomposition hereafter.
In this paper, we partly consider the setting where $\bm{\Sigma}$ depends on $n$, but we omit the subscript $n$ for simplicity, as our primary focus is the non-dependent case.
Henceforth, we assume that
\[ \bm{e}_1,\ldots,\bm{e}_n \stackrel{\mathrm{i.i.d.}}{\sim} \mathrm{N}_p(\bm{0}, \bm{\Sigma}) . \]

Let
$\bm{Y}=(\bm{y}_1,\ldots,\bm{y}_n)^\top$.
To infer $\bm{B}$ and $\bm{\Sigma}$, we introduce the sum-of-squares matrix $\bm{S}_T$, the regression sum-of-squares matrix $\bm{S}_R$, and the residual sum-of-squares matrix $\bm{S}_E$, defined as
\[
\bm{S}_T = \bm{Y}^\top (\bm{I}_n - n^{-1} \bm{1}_n \bm{1}_n^\top ) \bm{Y}, \quad
\bm{S}_R = \hat{\bm{B}}^\top \bm{X}_n^\top \bm{X}_n \hat{\bm{B}} , \quad
\bm{S}_E = \bm{S}_T - \bm{S}_R,
\]
where $\hat{\bm{B}} =  (\bm{X}_n^\top \bm{X}_n)^{-1} \bm{X}_n^\top \bm{Y}$, which is the dimension-wise ordinary least squares estimator of $\bm{B}$.
It follows that 
\[ \bm{S}_R = \bm{Y}^\top \bm{X}_n (\bm{X}_n^\top \bm{X}_n)^{-1} \bm{X}_n^\top \bm{Y}, \quad
\bm{S}_E = \bm{Y}^\top \{ \bm{I}_n - n^{-1} \bm{1}_n \bm{1}_n^\top - \bm{X}_n (\bm{X}_n^\top \bm{X}_n)^{-1} \bm{X}_n^\top \} \bm{Y}. \]
The multivariate allometric regression model, which is the primary focus of this paper, is a multivariate linear regression model under the restriction $\bm{B} = \bm{\alpha} \bm{\gamma}_1^\top$, where $\bm{\alpha}$ is a $q$-dimensional non-stochastic vector.
This restriction enforces a rank-one structure on $\bm{B}$.
That is to say, the multivariate allometric regression model considered in this paper is formulated as
\[
\bm{y}_i = \bm{\mu} + (\bm{\alpha}^\top \bm{x}_{n,i}) \bm{\gamma}_1 + \bm{e}_i \quad (i \in \{1,\ldots,n\}), \quad
(\bm{e}_1^\top,\ldots,\bm{e}_n^\top)^\top \sim \mathrm{N}_{np}(\bm{0}_{np} , \bm{\Sigma} \otimes \bm{I}_n) ,
\]
where $\otimes$ denotes the Kronecker product.

Under the allometric regression model, the directions of the eigenvectors of 
\[ \E[\bm{S}_T] = (n-1) \bm{\Sigma} + \| \bm{X}_n \bm{\alpha} \|^2 \bm{\gamma}_1 \bm{\gamma}_1^\top, \quad
\E[\bm{S}_R] = q \bm{\Sigma} + \| \bm{X}_n \bm{\alpha} \|^2 \bm{\gamma}_1 \bm{\gamma}_1^\top,\] 
and 
\[ \E[\bm{S}_E] = (n-1-q) \bm{\Sigma}\]
coincide with those of $\bm{\Sigma}$.
Hence, PCA can be conducted by utilizing not only $\bm{S}_E$ but also $\bm{S}_R$ or $\bm{S}_T$.
More generally, we can consider a hybrid PCA framework based on the convex combination
\[
\bm{S}(w) = (1-w) \bm{S}_R + w \bm{S}_E, \quad w \in [0,1].
\]
In particular, let $\hat{\bm{\gamma}}_1 (w)$ be the first principal eigenvector of $\bm{S}(w)$ for $w \in [0,1]$, which we refer to as a hybrid estimator.
This hybrid construction incorporates both covariance and mean structure and plays a key role in stabilizing the estimation of $\bm{\gamma}_1$.
Throughout this paper, to avoid the sign indeterminacy of the sample principal eigenvectors, we assume that
\[ \hat{\bm{\gamma}}_1(w)^\top \bm{\gamma}_1 \geq 0 \]
for all $w \in [0,1]$.

For this hybrid PCA framework, the consistency of $\hat{\bm{\gamma}}_1 (w)$ with $w \in [0,1]$ has been established under several asymptotic regimes in \citet{RefTM25}.
In Section~\ref{sec:3}, we prove the asymptotic normality of $\hat{\bm{\gamma}}_1 (w)$ for $w \in [0,1]$ under the classical asymptotic regime.
Based on this result, in Section~\ref{sec:4}, we propose a statistical test for the parallelism between the regression direction and the first principal eigenvector within the rank-one reduced-rank regression framework and investigate its theoretical properties.
Although the limiting distribution of the proposed test statistic is a weighted sum of chi-squared variables, we handle its practical computation efficiently via a moment-matching approximation, ensuring ease of implementation.
Moreover, finite-sample properties of the proposed test are verified through numerical experiments in Section~\ref{sec:5}.
In Section~\ref{sec:6}, we analyze the painted turtle carapace data to illustrate the practical behavior of the proposed test and confirm that the conclusions of \citet{RefBFN99} can be reproduced.
Furthermore, in Section~\ref{sec:7}, we complement these findings by analyzing the asymptotic normality under a weak identifiability regime where $\lambda_1 - \lambda_2 \to 0$.
Concluding remarks are given in Section~\ref{sec:8}.
Proofs of the theoretical results, detailed numerical results, and additional robustness checks under heteroscedasticity are provided in Appendix.

\begin{rem}
As stated in Section~\ref{sec:1}, the assumption $\lambda_1 > \lambda_2$ is natural in allometry studies.
In contrast, it is generally less clear whether the strict inequality holds for the minor eigenvalues, such as $\lambda_{p-1} > \lambda_p$.
Accordingly, we assume the ordered constraints $\lambda_1 > \lambda_2 \geq \lambda_3 \geq \cdots \geq \lambda_p > 0$.
\end{rem}

\begin{rem}
The assumption of non-stochastic explanatory variables is justified either as a standard conditional inference framework given the observed covariates or as a direct reflection of stratified sampling designs where group sample sizes are explicitly fixed.
\end{rem}

\section{Asymptotic normality of the first principal eigenvector}\label{sec:3}

In this section, we provide properties of the hybrid estimator $\hat{\bm{\gamma}}_1(w)$ for $w \in [0,1]$ under the multivariate allometric regression model.
To establish our main results, we first recall the following non-asymptotic result of \citet{RefTM25}.

\begin{lem} \label{lem1}
Let 
\[ 
a = \tr(\bm{\Sigma}^2) + (\tr (\bm{\Sigma}))^2 ,\quad 
b = \lambda_1 + \tr(\bm{\Sigma}) ,\quad
c= \| \bm{X}_n \bm{\alpha} \|^2, \quad
d= \lambda_1 - \lambda_2, \]
and $w \in [0, 1]$.
The ordered eigenvalues $\lambda_1(w),\ldots,\lambda_p(w)$ of $\E[\bm{S}(w)]$ are given by
\begin{align*}
&\lambda_1(w) = \{q(1-w) + (n-1-q)w \} \lambda_1 + (1-w) c, \\
&\lambda_k(w) = \{q(1-w) + (n-1-q)w \} \lambda_k 
\end{align*}
for $k \in  \{2,\ldots,p\}$.
It holds that
\begin{align*}
& \E[ \| \bm{S}(w) - \E[\bm{S}(w)] \|_{\mathrm{F}}^2]
= \{q(1-2w) + (n-1)w^2 \} a + 2 (1-w)^2 bc, \\
& \E[ \| \hat{\bm{\gamma}}_1(w) - \bm{\gamma}_1 \|^2 ] \leq \frac{ 8a \{q(1-2w) + (n-1)w^2 \} + 16bc(1-w)^2 }{[d\{ q + (n-1-2q) w \} + c (1-w)]^2},
\end{align*}
where $\|\cdot\|_{\mathrm{F}}$ denotes the Frobenius norm.
\end{lem}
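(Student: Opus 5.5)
The proof has three parts: (i) compute $\E[\bm{S}(w)]$ exactly, (ii) compute the Frobenius variance using Wishart moments, and (iii) apply a Davis--Kahan type perturbation bound. Throughout, write $\bm{P}_0 = n^{-1}\bm{1}_n\bm{1}_n^\top$, $\bm{H}=\bm{X}_n(\bm{X}_n^\top\bm{X}_n)^{-1}\bm{X}_n^\top$ and $\bm{E}=(\bm{e}_1,\ldots,\bm{e}_n)^\top$. Since $\bm{X}_n^\top\bm{1}_n=\bm{0}_q$, the matrices $\bm{H}$ and $\bm{I}_n-\bm{P}_0-\bm{H}$ are orthogonal projections with ranks $q$ and $n-1-q$, and they are mutually orthogonal. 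Substituting $\bm{Y}=\bm{1}_n\bm{\mu}^\top+\bm{X}_n\bm{\alpha}\bm{\gamma}_1^\top+\bm{E}$ gives
\[
\bm{S}_R=\bm{E}^\top\bm{H}\bm{E}+\bm{\gamma}_1\bm{\alpha}^\top\bm{X}_n^\top\bm{E}+\bm{E}^\top\bm{X}_n\bm{\alpha}\bm{\gamma}_1^\top+c\,\bm{\gamma}_1\bm{\gamma}_1^\top
\]
and $\bm{S}_E=\bm{E}^\top(\bm{I}_n-\bm{P}_0-\bm{H})\bm{E}$. Taking expectations yields $\E[\bm{S}(w)]=\{q(1-w)+(n-1-q)w\}\bm{\Sigma}+(1-w)c\,\bm{\gamma}_1\bm{\gamma}_1^\top$. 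This matrix is diagonalized by $\bm{\Gamma}$, which gives the stated eigenvalues. The ordering is preserved because $(1-w)c\ge 0$ and $\lambda_1>\lambda_2$.

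For the second moment, decompose $\bm{S}(w)-\E[\bm{S}(w)]=(1-w)(\bm{W}_1-q\bm{\Sigma})+w(\bm{W}_2-(n-1-q)\bm{\Sigma})+(1-w)(\bm{\gamma}_1\bm{z}^\top+\bm{z}\bm{\gamma}_1^\top)$. Here $\bm{W}_1=\bm{E}^\top\bm{H}\bm{E}\sim W_p(q,\bm{\Sigma})$ and $\bm{W}_2\sim W_p(n-1-q,\bm{\Sigma})$ are independent, and $\bm{z}=\bm{E}^\top\bm{X}_n\bm{\alpha}\sim \mathrm{N}_p(\bm{0},c\bm{\Sigma})$. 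All cross terms have zero mean. Two facts make this work:
\begin{itemize}
\item $\bm{z}$ is independent of $\bm{W}_2$, and uncorrelated with $\bm{W}_1$ at the level of third moments, because odd Gaussian moments vanish.
\item For $\bm{W}\sim W_p(m,\bm{\Sigma})$, the standard Wishart identity gives $\E\|\bm{W}-m\bm{\Sigma}\|_{\mathrm F}^2=m\{\tr(\bm{\Sigma}^2)+(\tr\bm{\Sigma})^2\}=ma$.
\end{itemize}
The rank-one term satisfies $\E\|\bm{\gamma}_1\bm{z}^\top+\bm{z}\bm{\gamma}_1^\top\|_{\mathrm F}^2=2\E\|\bm{z}\|^2+2\E(\bm{\gamma}_1^\top\bm{z})^2=2c(\tr\bm{\Sigma}+\lambda_1)=2bc$. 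Summing, $(1-w)^2qa+w^2(n-1-q)a=\{q(1-2w)+(n-1)w^2\}a$, plus $2(1-w)^2bc$, which is the stated identity.

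For the eigenvector bound, apply the Davis--Kahan $\sin\theta$ theorem in the Yu--Wang--Samworth form. This form gives $\|\hat{\bm{\gamma}}_1(w)-\bm{\gamma}_1\|\le 2^{3/2}\|\bm{S}(w)-\E[\bm{S}(w)]\|_{\mathrm F}/\delta$ under the sign convention $\hat{\bm{\gamma}}_1(w)^\top\bm{\gamma}_1\ge0$, where $\delta=\lambda_1(w)-\lambda_2(w)=d\{q+(n-1-2q)w\}+c(1-w)$. Squaring gives the constant $8$, and taking expectations with the second-moment identity gives exactly $8a\{\cdots\}+16bc(1-w)^2$ over $\delta^2$. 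The main obstacle is matching this constant: the bound needs the population eigengap only, not the sample one, and the sign alignment must be used to pass from $\sin\theta$ to the Euclidean distance. The Yu--Wang--Samworth variant handles both points, since $\|\hat{\bm{v}}-\bm{v}\|^2=2(1-\cos\theta)\le 2\sin^2\theta$ when $\cos\theta\ge0$. If $\delta=0$, which is excluded here because $\lambda_1>\lambda_2$, the bound would be vacuous.
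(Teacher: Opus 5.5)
Your proof is correct. The decomposition $\bm{S}(w)-\E[\bm{S}(w)]=(1-w)(\bm{W}_1-q\bm{\Sigma})+w\{\bm{W}_2-(n-1-q)\bm{\Sigma}\}+(1-w)(\bm{\gamma}_1\bm{z}^\top+\bm{z}\bm{\gamma}_1^\top)$, the Wishart identity $\E\|\bm{W}-m\bm{\Sigma}\|_{\mathrm F}^2=ma$, and the Yu--Wang--Samworth bound with the population gap $\lambda_1(w)-\lambda_2(w)=d\{q+(n-1-2q)w\}+c(1-w)$ reproduce the stated identity and the constants $8$ and $16$ exactly. The paper does not reprove this lemma; it recalls it from \citet{RefTM25}. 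However, the proof of Corollary~\ref{cor0} appeals to the Davis--Kahan theorem together with Frobenius moment bounds, which indicates that your route is essentially the intended one.
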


Using Lemma~\ref{lem1}, we discuss asymptotic properties of $\hat{\bm{\gamma}}_1(w)$ for $w \in [0,1]$.
Consider the following classical asymptotic regime.

\begin{ass}\label{ass1}
As $n\to\infty$ with $p,q,\bm{\alpha},\bm{\Sigma}$ fixed, $\bm{X}_n \bm{\alpha} \neq \bm{0}$ and $n^{-1} \bm{X}_n^\top \bm{X}_n$ converges to a positive-definite matrix.
\end{ass}

Under Assumption~\ref{ass1}, $c/n = \bm{\alpha}^\top ( n^{-1} \bm{X}_n^\top  \bm{X}_n ) \bm{\alpha}$ converges, and we define
\begin{equation} \label{cinf}
c_\infty := \lim_{n \to \infty}  (n^{-1}  \|\bm{X}_n \bm{\alpha} \|^2).    
\end{equation}

\begin{thm}\label{thm1}
Let Assumption~\ref{ass1} hold.
For $w \in [0,1]$, as $n\to\infty$, 
\begin{align*}
& n^{1/2} (\hat{\bm{\gamma}}_1(w) - \bm{\gamma}_1) \\
& \Rightarrow
\mathrm{N}_p \left( \bm{0}_p , \{w^2 \lambda_1 + c_\infty (1-w)^2 \} \sum_{k=2}^p \frac{  \lambda_k}{\{ w (\lambda_1 - \lambda_k) + (1-w) c_\infty \}^2} \bm{\gamma}_k \bm{\gamma}_k^\top \right),
\end{align*}
where $\Rightarrow$ denotes convergence in distribution.
\end{thm}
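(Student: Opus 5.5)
The plan is to use a first-order eigenvector perturbation expansion for the normalized matrix $n^{-1}\bm{S}(w)$ around a deterministic limit, and then prove a CLT for the resulting linear term by conditioning on the $\bm{\gamma}_1$-component of the errors.

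\textbf{Step 1 (deterministic limit and gaps).} Set $\bm{M}_n = n^{-1}\bm{S}(w)$. By Lemma~\ref{lem1}, $n^{-1}\E[\bm{S}(w)] = \{q(1-w)+(n-1-q)w\}n^{-1}\bm{\Sigma} + (1-w)(c/n)\bm{\gamma}_1\bm{\gamma}_1^\top$. This converges to
\[
\bm{M} = w\bm{\Sigma} + (1-w)c_\infty\bm{\gamma}_1\bm{\gamma}_1^\top ,
\]
which has eigenvectors $\bm{\gamma}_k$ and eigenvalues $w\lambda_1+(1-w)c_\infty$ and $w\lambda_k$ for $k\ge 2$. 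The gaps are $g_k = w(\lambda_1-\lambda_k)+(1-w)c_\infty$, and they are positive for every $w\in[0,1]$ because $c_\infty>0$ and $\lambda_1>\lambda_2$. Lemma~\ref{lem1} gives $\|\bm{M}_n-\bm{M}\|_{\mathrm F}=O_p(n^{-1/2})$. The standard expansion of a simple eigenvector, written in terms of the spectral projector so that ties among the minor eigenvalues cause no trouble, together with the sign convention, then yields
\[
\hat{\bm{\gamma}}_1(w)-\bm{\gamma}_1=\sum_{k=2}^p g_k^{-1}\bm{\gamma}_k\bm{\gamma}_k^\top(\bm{M}_n-\bm{M})\bm{\gamma}_1+O_p(n^{-1}).
\]
The bias $n^{-1}\E[\bm{S}(w)]-\bm{M}$ is $O(n^{-1})\bm{\Sigma}$ plus a multiple of $\bm{\gamma}_1\bm{\gamma}_1^\top$. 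The second piece is annihilated by $\bm{\gamma}_k^\top(\cdot)\bm{\gamma}_1$ for $k\ge2$, so only the centered terms $n^{-1/2}\bm{\gamma}_k^\top(\bm{S}(w)-\E[\bm{S}(w)])\bm{\gamma}_1$ matter.

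\textbf{Step 2 (explicit form of the linear term).} Let $\bm{E}=(\bm{e}_1,\ldots,\bm{e}_n)^\top$, let $\bm{P}=\bm{X}_n(\bm{X}_n^\top\bm{X}_n)^{-1}\bm{X}_n^\top$, and let $\bm{Q}=\bm{I}_n-n^{-1}\bm{1}_n\bm{1}_n^\top-\bm{P}$. Put $\bm{z}_k=\bm{E}\bm{\gamma}_k$; these are independent $\mathrm{N}_n(\bm{0},\lambda_k\bm{I}_n)$ vectors. Since $\bm{P}\bm{1}_n=\bm{0}$, expanding $\bm{S}_R$ and $\bm{S}_E$ gives, for $k\ge2$,
\[
\bm{\gamma}_k^\top\bm{S}_R\bm{\gamma}_1=(\bm{X}_n\bm{\alpha})^\top\bm{z}_k+\bm{z}_k^\top\bm{P}\bm{z}_1,\qquad
\bm{\gamma}_k^\top\bm{S}_E\bm{\gamma}_1=\bm{z}_k^\top\bm{Q}\bm{z}_1 .
\]
The term $\bm{z}_k^\top\bm{P}\bm{z}_1$ is $O_p(1)$ because $\bm{P}$ has rank $q$, so it vanishes after scaling by $n^{-1/2}$. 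The leading term is therefore
\[
T_{n,k}=n^{-1/2}\bm{z}_k^\top\bm{v}_n,\qquad \bm{v}_n=(1-w)\bm{X}_n\bm{\alpha}+w\bm{Q}\bm{z}_1 .
\]

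\textbf{Step 3 (CLT, the main step).} A general multivariate Lindeberg argument for a bilinear form is not needed; conditioning on $\bm{z}_1$ suffices. Given $\bm{z}_1$, the vector $(T_{n,2},\ldots,T_{n,p})$ is exactly Gaussian with independent coordinates of variances $\lambda_k\|\bm{v}_n\|^2/n$. Since $\bm{Q}\bm{X}_n\bm{\alpha}=\bm{0}$, the cross term drops out and
\[
\|\bm{v}_n\|^2/n=(1-w)^2c/n+w^2\bm{z}_1^\top\bm{Q}\bm{z}_1/n\to(1-w)^2c_\infty+w^2\lambda_1
\]
in probability, by the law of large numbers for the $\chi^2_{n-1-q}$ variable $\bm{z}_1^\top\bm{Q}\bm{z}_1/\lambda_1$. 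Convergence of the conditional characteristic functions to a deterministic limit, combined with dominated convergence, gives joint convergence of $(T_{n,k})_{k\ge2}$ to independent $\mathrm{N}(0,\lambda_k\{w^2\lambda_1+(1-w)^2c_\infty\})$ variables. Finally, $n^{1/2}(\hat{\bm{\gamma}}_1(w)-\bm{\gamma}_1)=\sum_{k\ge2}g_k^{-1}T_{n,k}\bm{\gamma}_k+o_p(1)$, and the continuous mapping theorem gives exactly the covariance stated in Theorem~\ref{thm1}.

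The step needing the most care is the perturbation bound in Step 1. It must be uniform in $w$, including $w=0$ where $\bm{M}$ has rank one, and valid with tied minor eigenvalues. I would obtain it from the resolvent (Riesz projector) representation, since only the isolated top eigenvalue of $\bm{M}$ and the gap $\min_k g_k>0$ enter.
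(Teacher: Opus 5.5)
Your argument reaches the correct limit, and its CLT step differs from the paper's. One claim in Step~1 is false as stated, but it is easily repaired.

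\textbf{The flaw in Step~1.} Lemma~\ref{lem1} does not give $\|\bm{M}_n-\bm{M}\|_{\mathrm F}=O_p(n^{-1/2})$. It controls only $\bm{M}_n-n^{-1}\E[\bm{S}(w)]$. The deterministic bias $n^{-1}\E[\bm{S}(w)]-\bm{M}$ contains $(1-w)(c/n-c_\infty)\bm{\gamma}_1\bm{\gamma}_1^\top$, and Assumption~\ref{ass1} imposes no rate on the convergence of $n^{-1}\bm{X}_n^\top\bm{X}_n$.

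You correctly note that this piece drops out of the linear term. However, a generic second-order remainder bound $O(\|\bm{M}_n-\bm{M}\|^2)$ is then only $O_p(n^{-1}+(c/n-c_\infty)^2)$. That bound is not $o_p(n^{-1/2})$ when $c/n-c_\infty$ decays more slowly than $n^{-1/4}$.

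The repair is to expand around $\bar{\bm{M}}_n=n^{-1}\E[\bm{S}(w)]$ rather than around its limit, which is what the paper does.
\begin{itemize}
\item $\bar{\bm{M}}_n$ has exactly the eigenvectors $\bm{\gamma}_1,\ldots,\bm{\gamma}_p$. Ties among the minor ones are harmless, since only the reduced resolvent $\sum_{k\ge 2}g_{k,n}^{-1}\bm{\gamma}_k\bm{\gamma}_k^\top$ enters.
\item Its gaps are $g_{k,n}=n^{-1}\{\lambda_1(w)-\lambda_k(w)\}\to g_k>0$.
\item The remainder is then $O_p(\|\bm{M}_n-\bar{\bm{M}}_n\|_{\mathrm F}^2)=O_p(n^{-1})$.
\item Replacing $g_{k,n}^{-1}$ by $g_k^{-1}$ in the linear term costs only $o_p(n^{-1/2})$.
\end{itemize}
No uniformity in $w$ is needed, because $w$ is fixed.

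\textbf{Comparison of the CLT steps.} With this change your proof is complete, and Steps 2--3 give a cleaner CLT than the paper's.

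The paper proceeds as follows:
\begin{itemize}
\item It splits $\bm{S}(w)-\E[\bm{S}(w)]$ into the independent $\bm{S}_R$ and $\bm{S}_E$ parts.
\item It replaces each part by a distributional representation in i.i.d.\ $\mathrm{N}_p(\bm{0},\bm{\Sigma})$ vectors. For $\bm{S}_R$ it uses $q$ such vectors, one of which carries the cross term with $\|\bm{X}_n\bm{\alpha}\|$.
\item It applies the classical CLT to the $\bm{S}_E$ part.
\item It recombines the two limits via independence and the Cram\'er--Wold device.
\end{itemize}

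Your route works with explicit bilinear forms $\bm{z}_k^\top\bm{v}_n$ on a single probability space and conditions on $\bm{z}_1$. This buys several things:
\begin{itemize}
\item The leading term is exactly Gaussian given $\bm{z}_1$.
\item The only limit theorem needed is the law of large numbers for a $\chi^2_{n-1-q}$ variable.
\item Joint convergence over $k$ comes directly from the conditional characteristic function.
\item The identity $\bm{Q}\bm{X}_n\bm{\alpha}=\bm{0}$ shows transparently why the contributions $(1-w)^2c_\infty$ and $w^2\lambda_1$ simply add.
\end{itemize}

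Both routes use normality essentially: yours through the independence of the uncorrelated $\bm{z}_k$, the paper's through the distributional representations.
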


\begin{cor}\label{cor0}
Let Assumption~\ref{ass1} hold.
For $w \in [0,1]$, as $n\to\infty$, 
\[
n \E[\| \hat{\bm{\gamma}}_1(w) - \bm{\gamma}_1 \|^2]
\to \{w^2 \lambda_1 + c_\infty (1-w)^2 \} \sum_{k=2}^p \frac{  \lambda_k}{\{ w (\lambda_1 - \lambda_k) + (1-w) c_\infty \}^2}.
\]
\end{cor}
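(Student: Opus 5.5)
Corollary~\ref{cor0} upgrades the distributional convergence of Theorem~\ref{thm1} to convergence of second moments. The plan is to combine Theorem~\ref{thm1} with uniform integrability of $Z_n := n\|\hat{\bm{\gamma}}_1(w) - \bm{\gamma}_1\|^2$.

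The first step is to identify the limit in distribution. By Theorem~\ref{thm1} and the continuous mapping theorem, $Z_n \Rightarrow \|\bm{Z}\|^2$, where $\bm{Z}$ is the Gaussian limit. Its expectation is the trace of the limiting covariance. Since $\|\bm{\gamma}_k\|=1$ for each $k$, this trace equals
\[
\{w^2\lambda_1 + c_\infty(1-w)^2\}\sum_{k=2}^p \frac{\lambda_k}{\{w(\lambda_1-\lambda_k)+(1-w)c_\infty\}^2},
\]
which is the claimed value. It therefore suffices to show that $\{Z_n\}$ is uniformly integrable. By a standard result, convergence in distribution together with uniform integrability gives $\E[Z_n]\to\E\|\bm{Z}\|^2$.

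Uniform integrability is the main obstacle. Lemma~\ref{lem1} only gives $\sup_n \E[Z_n]<\infty$: the numerator is $O(n)$ and the denominator is of order $n^2$, because $d>0$, $c\sim c_\infty n$ with $c_\infty>0$, and $w\in[0,1]$. A bounded first moment is not enough, so I need a slightly higher moment, for example $\sup_n \E[Z_n^2]<\infty$. I would get this from the same Davis--Kahan type argument that underlies Lemma~\ref{lem1}. That argument gives $\|\hat{\bm{\gamma}}_1(w)-\bm{\gamma}_1\| \le 2\sqrt{2}\,\|\bm{S}(w)-\E[\bm{S}(w)]\|_{\mathrm{F}}/\{\lambda_1(w)-\lambda_2(w)\}$, with the gap $\lambda_1(w)-\lambda_2(w)$ of exact order $n$. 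Hence
\[
\E[Z_n^2]\le 64\, n^2\, \E\|\bm{S}(w)-\E[\bm{S}(w)]\|_{\mathrm{F}}^4 \big/ \{\lambda_1(w)-\lambda_2(w)\}^4 .
\]

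It remains to show that $\E\|\bm{S}(w)-\E[\bm{S}(w)]\|_{\mathrm{F}}^4 = O(n^2)$. I would write $\bm{S}_R-\E\bm{S}_R$ and $\bm{S}_E-\E\bm{S}_E$ as centred Wishart parts plus cross terms. The Wishart part of $\bm{S}_R$ has fixed degrees of freedom $q$ and the one of $\bm{S}_E$ has $n-1-q$. The cross term in $\bm{S}_R$ is $\bm{\gamma}_1\bm{\alpha}^\top\bm{X}_n^\top\bm{E}+\bm{E}^\top\bm{X}_n\bm{\alpha}\bm{\gamma}_1^\top$, with $\bm{E}=(\bm{e}_1,\ldots,\bm{e}_n)^\top$; it is Gaussian with variance $O(c)=O(n)$. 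Each entry is thus either a centred sum of $n$ i.i.d.\ products of Gaussians or a Gaussian with variance $O(n)$. Either way its fourth moment is $O(n^2)$, by Rosenthal's inequality or by the Gaussian moment identities. Since $p$ is fixed, summing over entries gives the bound, so $\E[Z_n^2]=O(1)$ and $\{Z_n\}$ is uniformly integrable.

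One caveat concerns the endpoint $w=0$. There $\bm{S}(0)=\bm{S}_R$ has rank at most $q$, but the gap $\lambda_1(0)-\lambda_2(0) = q d + c$ is still of order $n$ because $c_\infty>0$. The bound therefore still applies, and the corollary follows for every $w\in[0,1]$.
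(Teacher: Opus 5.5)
Your proposal is correct and follows essentially the same route as the paper. The paper also derives uniform integrability of $n\|\hat{\bm{\gamma}}_1(w)-\bm{\gamma}_1\|^2$ from the Davis--Kahan bound and Gaussian moment bounds on $\|\bm{S}(w)-\E[\bm{S}(w)]\|_{\mathrm{F}}$, then applies Theorem~\ref{thm1}; you make that sketch explicit by bounding $\E[Z_n^2]$ through the $O(n^2)$ fourth moment of the Frobenius norm against an eigenvalue gap of exact order $n$ for every $w\in[0,1]$.
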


\begin{rem}
It follows from Theorem~\ref{thm1} that, as long as $w \in [0,1)$, the asymptotic variance remains bounded even when $\lambda_1-\lambda_2$ is small.
Corollary~\ref{cor0} implies that the same conclusion holds for the mean squared error (MSE).
\end{rem}

\begin{rem}
Using Corollary~\ref{cor0}, we can formally choose a data-dependent weight $w$ that minimizes the plug-in estimator of the asymptotic MSE. 
This suggests a possible data-adaptive improvement.
Although implementing this selection rule and proving its asymptotic optimality under the $M$-estimation framework is an interesting direction, it is beyond the scope of the present study and will be investigated in future work.
\end{rem}

From Theorem~\ref{thm1}, the following corollaries follow.

\begin{cor}\label{cor1}
Let Assumption~\ref{ass1} hold.
Then, as $n\to\infty$, 
\[
n^{1/2} (\hat{\bm{\gamma}}_1(1) - \hat{\bm{\gamma}}_1(0))
\Rightarrow
\mathrm{N}_p \left( \bm{0}_p , \sum_{k=2}^p \left\{ \frac{  \lambda_1}{ (\lambda_1 - \lambda_k)^2}  + \frac{1}{c_\infty} \right\} \lambda_k \bm{\gamma}_k \bm{\gamma}_k^\top \right).
\]
\end{cor}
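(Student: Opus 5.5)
The plan is to get the joint limit of $(\hat{\bm{\gamma}}_1(1),\hat{\bm{\gamma}}_1(0))$ from the two marginal limits in Theorem~\ref{thm1}, using exact independence of the two estimators for every $n$. No new perturbation expansion should be needed.

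\emph{Step 1: independence for each $n$.} Write $\bm{H} = \bm{I}_n - n^{-1}\bm{1}_n\bm{1}_n^\top$, $\bm{P} = \bm{X}_n(\bm{X}_n^\top\bm{X}_n)^{-1}\bm{X}_n^\top$ and $\bm{Q} = \bm{H}-\bm{P}$. Since $\bm{1}_n^\top\bm{X}_n=\bm{0}_q^\top$, the matrices $\bm{P}$ and $\bm{Q}$ are symmetric idempotent with $\bm{P}\bm{Q}=\bm{O}$. Then
\[
\bm{S}_R=(\bm{P}\bm{Y})^\top(\bm{P}\bm{Y}), \qquad \bm{S}_E=(\bm{Q}\bm{Y})^\top(\bm{Q}\bm{Y}).
\]
The rows of $\bm{Y}$ are independent Gaussian with common covariance $\bm{\Sigma}$, so $\mathrm{vec}(\bm{Y})$ has covariance $\bm{\Sigma}\otimes\bm{I}_n$ in the appropriate ordering. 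The cross-covariance of $\bm{P}\bm{Y}$ and $\bm{Q}\bm{Y}$ is therefore $\bm{\Sigma}\otimes\bm{P}\bm{Q}=\bm{O}$. By joint Gaussianity, $\bm{P}\bm{Y}$ and $\bm{Q}\bm{Y}$ are independent, hence so are $\bm{S}_R$ and $\bm{S}_E$.

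Now $\hat{\bm{\gamma}}_1(0)$ is the leading eigenvector of $\bm{S}_R=\bm{S}(0)$, and $\hat{\bm{\gamma}}_1(1)$ is that of $\bm{S}_E=\bm{S}(1)$. The sign is fixed by the deterministic rule $\hat{\bm{\gamma}}_1(w)^\top\bm{\gamma}_1\ge 0$, so each estimator is a measurable function of its own matrix. Ties in the top eigenvalue occur with probability zero, or can be handled by any measurable selection. Hence $\hat{\bm{\gamma}}_1(0)$ and $\hat{\bm{\gamma}}_1(1)$ are independent for every $n$.

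\emph{Step 2: joint convergence.} Theorem~\ref{thm1} with $w=1$ gives
\[
n^{1/2}(\hat{\bm{\gamma}}_1(1)-\bm{\gamma}_1)\Rightarrow \mathrm{N}_p\Bigl(\bm{0}_p,\ \sum_{k=2}^p \lambda_1\lambda_k(\lambda_1-\lambda_k)^{-2}\bm{\gamma}_k\bm{\gamma}_k^\top\Bigr).
\]
With $w=0$ the prefactor is $c_\infty$ and the denominators are $c_\infty^2$, which gives
\[
n^{1/2}(\hat{\bm{\gamma}}_1(0)-\bm{\gamma}_1)\Rightarrow \mathrm{N}_p\Bigl(\bm{0}_p,\ \sum_{k=2}^p (\lambda_k/c_\infty)\bm{\gamma}_k\bm{\gamma}_k^\top\Bigr).
\]
By Step 1 the joint characteristic function of the pair factorizes for each $n$ into the product of the marginal ones. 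Each factor converges, so the pair converges jointly to a pair of independent Gaussian vectors with these covariances.

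\emph{Step 3: the difference.} The continuous mapping $(\bm{u},\bm{v})\mapsto\bm{u}-\bm{v}$ then gives a centred Gaussian limit for $n^{1/2}(\hat{\bm{\gamma}}_1(1)-\hat{\bm{\gamma}}_1(0))$. Its covariance is the sum of the two covariances, $\sum_{k=2}^p\{\lambda_1/(\lambda_1-\lambda_k)^2+1/c_\infty\}\lambda_k\bm{\gamma}_k\bm{\gamma}_k^\top$, which is the claimed limit.

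The only step needing care is Step 1: checking $\bm{P}\bm{Q}=\bm{O}$, which uses the centring $\sum_i\bm{x}_{n,i}=\bm{0}_q$, and confirming that the sign normalization does not break the function-of-one-matrix structure. Everything else is a direct reading of Theorem~\ref{thm1}.
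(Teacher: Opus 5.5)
Your proposal is correct and follows the same route as the paper, whose proof simply observes that $\bm{S}_R$ and $\bm{S}_E$ are independent, so $\hat{\bm{\gamma}}_1(0)$ and $\hat{\bm{\gamma}}_1(1)$ are independent, and then applies Theorem~\ref{thm1} at $w=0$ and $w=1$. You additionally make explicit the details the paper leaves implicit: the check $\bm{P}\bm{Q}=\bm{O}$, which uses the centring of $\bm{X}_n$; the role of the sign normalization; and the factorization of characteristic functions that upgrades the two marginal limits to a joint limit, so that the covariances add.
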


\begin{cor}\label{cor2}
Let Assumption~\ref{ass1} hold.
Then, as $n\to\infty$, 
\[
n\{ 1 - (\hat{\bm{\gamma}}_1(1)^\top \hat{\bm{\gamma}}_1(0))^2 \}
\Rightarrow
\sum_{k=2}^p \left\{ \frac{  \lambda_1}{ (\lambda_1 - \lambda_k)^2}  + \frac{1}{c_\infty} \right\} \lambda_k  \chi^2_{1,k},
\]
where $\chi^2_{1,2},\ldots,\chi^2_{1,p}$ are independent $\chi^2(1)$ variables.
\end{cor}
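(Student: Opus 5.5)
Corollary~\ref{cor2} follows from Corollary~\ref{cor1} by the continuous mapping theorem, once we write $1-(\hat{\bm{\gamma}}_1(1)^\top \hat{\bm{\gamma}}_1(0))^2$ as a quadratic form in the difference vector. Set $\bm{D}_n = n^{1/2}(\hat{\bm{\gamma}}_1(1) - \hat{\bm{\gamma}}_1(0))$. Both eigenvectors are unit vectors, so
\[
\|\hat{\bm{\gamma}}_1(1) - \hat{\bm{\gamma}}_1(0)\|^2 = 2\{1 - \hat{\bm{\gamma}}_1(1)^\top \hat{\bm{\gamma}}_1(0)\}.
\]
Writing $t_n = \hat{\bm{\gamma}}_1(1)^\top \hat{\bm{\gamma}}_1(0)$, this gives
\[
n(1 - t_n^2) = n(1-t_n)(1+t_n) = \tfrac{1}{2}(1+t_n)\|\bm{D}_n\|^2 .
\]

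By Theorem~\ref{thm1} (or Lemma~\ref{lem1}), both $\hat{\bm{\gamma}}_1(1)$ and $\hat{\bm{\gamma}}_1(0)$ converge in probability to $\bm{\gamma}_1$. Hence $t_n \to 1$ in probability and $\tfrac{1}{2}(1+t_n) \to 1$. By Corollary~\ref{cor1} and the continuous mapping theorem, $\|\bm{D}_n\|^2 \Rightarrow \|\bm{Z}\|^2$, where $\bm{Z} \sim \mathrm{N}_p(\bm{0}_p, \bm{V})$ and $\bm{V}$ is the limiting covariance in Corollary~\ref{cor1}. Slutsky's lemma then yields $n(1-t_n^2) \Rightarrow \|\bm{Z}\|^2$.

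It remains to identify the law of $\|\bm{Z}\|^2$. Since $\bm{V} = \sum_{k=2}^p \nu_k \bm{\gamma}_k\bm{\gamma}_k^\top$ with $\nu_k = \{\lambda_1/(\lambda_1-\lambda_k)^2 + 1/c_\infty\}\lambda_k > 0$, we can represent $\bm{Z} = \sum_{k=2}^p \nu_k^{1/2} \xi_k \bm{\gamma}_k$ with $\xi_2,\ldots,\xi_p$ i.i.d.\ standard normal. Orthonormality of $\bm{\gamma}_2,\ldots,\bm{\gamma}_p$ gives
\[
\|\bm{Z}\|^2 = \sum_{k=2}^p \nu_k \xi_k^2 ,
\]
which is the stated weighted sum of independent $\chi^2(1)$ variables.

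This holds for any fixed choice of $\bm{\gamma}_2,\ldots,\bm{\gamma}_p$: even when minor eigenvalues coincide, the vectors remain orthonormal, and the weights are constant on each eigenspace. The only point needing care is the sign convention for the eigenvectors. The convention $\hat{\bm{\gamma}}_1(w)^\top\bm{\gamma}_1 \ge 0$ is what makes Corollary~\ref{cor1} valid, while the statistic $1-t_n^2$ does not depend on signs at all. So no step here is a real obstacle; the substantive work was already done in Theorem~\ref{thm1} and Corollary~\ref{cor1}.
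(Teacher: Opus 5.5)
Your proposal is correct and follows essentially the paper's route: apply the continuous mapping theorem to Corollary~\ref{cor1} to get the limit of $n\|\hat{\bm{\gamma}}_1(1)-\hat{\bm{\gamma}}_1(0)\|^2 = 2n(1-t_n)$, then use Slutsky with $t_n \to 1$. The paper writes the Slutsky step additively, showing the difference $n(1-t_n^2)-2n(1-t_n) = (t_n-1)\,n(1-t_n)$ is negligible, whereas you write it multiplicatively through the factor $\tfrac12(1+t_n)$; this is the same argument, and your explicit identification of $\|\bm{Z}\|^2$ as the weighted chi-squared sum fills in a step the paper leaves implicit.
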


\section{Test of parallelism between the regression direction and the first principal eigenvector}\label{sec:4}

In this section, we consider the reduced-rank regression model of rank 1 formulated by
\[
\bm{y}_i = \bm{\mu} + (\bm{\alpha}^\top \bm{x}_{n,i}) \bm{\beta}  + \bm{e}_i \quad (i \in \{1,\ldots,n\}), \quad
(\bm{e}_1^\top,\ldots,\bm{e}_n^\top)^\top \sim \mathrm{N}_{np}(\bm{0}, \bm{\Sigma} \otimes \bm{I}_n) ,
\]
where $\bm{\beta} \in \R^p$ satisfies $\|\bm{\beta} \|=1$ and $\bm{\beta}^\top \bm{\gamma}_1 \geq 0$ to fix the sign indeterminacy.
For the general theory of reduced-rank regression, we refer to, for example, \cite{RefRVC22}.
We test null hypothesis $\Hc_0: \bm{\beta}^\top \bm{\gamma}_1 = 1$ against alternative $\Hc_1: \bm{\beta}^\top \bm{\gamma}_1 < 1$ at a significance level $\alpha_0$.
The null hypothesis $\bm{\beta}^\top \bm{\gamma}_1 = 1$ is equivalent to $\bm{\beta} \parallel \bm{\gamma}_1$.
We treat $\Hc_1$ as a fixed alternative, so that $\bm{\beta}$ and $\bm{\gamma}_1$ do not depend on $n$.

\begin{rem}
Under $\Hc_0$, since $\bm{\beta}$ lies on the unit sphere with a fixed sign, no boundary constraints arise in the asymptotic inference.
\end{rem}

\begin{rem}
Our setting corresponds to that of \cite{RefBFN99} in the regression context, under an additional assumption of common covariance matrices across two populations.
\end{rem}

Let $\hat{\lambda}_1,\ldots,\hat{\lambda}_p$ be the ordered eigenvalues of
\[ 
\hat{\bm{\Sigma}} = (n-q-1)^{-1} \bm{S}_E ,
\] 
which are estimators for the ordered eigenvalues $\lambda_1,\ldots,\lambda_p$ of $\bm{\Sigma}$.
Moreover, let
\[
\hat{c}_\infty 
= n^{-1} (\tr(\bm{S}_R) - q \tr(\hat{\bm{\Sigma}}) )
\]
be an estimator of $c_\infty$ defined in \eqref{cinf}.
Using these, we consider a statistic
\begin{equation} \label{Vhat}
\hat{V}_k = \left\{ \frac{  \hat{\lambda}_1}{ (\hat{\lambda}_1 - \hat{\lambda}_k)^2}  + \frac{1}{\hat{c}_\infty} \right\} \hat{\lambda}_k
\end{equation}
for  $k \in \{2,\ldots,p\}$.
The validity of the plug-in approach is justified by the following lemma.

\begin{lem}\label{lem2}
Let Assumption~\ref{ass1} hold.
Then, 
\[
\sum_{k=2}^p \hat{V}_k \chi^2_{1,k}
-
\sum_{k=2}^p \left\{ \frac{  \lambda_1}{ (\lambda_1 - \lambda_k)^2}  + \frac{1}{c_\infty} \right\} \lambda_k  \chi^2_{1,k}
\stackrel{\mathrm{p}}{\to} 0
\]
as $n\to\infty$, where $\stackrel{\mathrm{p}}{\to}$ denotes convergence in probability.
\end{lem}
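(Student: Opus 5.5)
The plan is to show $\hat{V}_k \stackrel{\mathrm{p}}{\to} V_k$ for every $k\in\{2,\ldots,p\}$, where
\[
V_k := \left\{ \frac{\lambda_1}{(\lambda_1-\lambda_k)^2} + \frac{1}{c_\infty}\right\}\lambda_k .
\]
We then write the difference in the statement as $\sum_{k=2}^p (\hat{V}_k - V_k)\chi^2_{1,k}$. Each $\chi^2_{1,k}$ is $O_{\mathrm{p}}(1)$ (it has a fixed distribution), and $p$ is fixed. Hence each summand is $o_{\mathrm{p}}(1)\cdot O_{\mathrm{p}}(1)=o_{\mathrm{p}}(1)$, and the finite sum tends to zero in probability. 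This holds whatever the dependence between the $\chi^2_{1,k}$ and the data, so independence is not needed. By the continuous mapping theorem, the convergence $\hat{V}_k \stackrel{\mathrm{p}}{\to} V_k$ reduces to three facts:
\begin{itemize}
\item $\hat{\lambda}_j \stackrel{\mathrm{p}}{\to} \lambda_j$ for all $j$;
\item $\hat{c}_\infty \stackrel{\mathrm{p}}{\to} c_\infty$;
\item the map $(x_1,x_k,z)\mapsto \{x_1/(x_1-x_k)^2+1/z\}x_k$ is continuous at $(\lambda_1,\lambda_k,c_\infty)$.
\end{itemize}
Continuity holds because $\lambda_1>\lambda_k$ and because $c_\infty>0$.

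\emph{Step 1: $c_\infty>0$.} This needs a short argument. Assumption~\ref{ass1} gives $n^{-1}\bm{X}_n^\top\bm{X}_n\to\bm{M}$ with $\bm{M}$ positive definite. The condition $\bm{X}_n\bm{\alpha}\neq\bm{0}$ forces $\bm{\alpha}\neq\bm{0}$. Therefore $c_\infty=\bm{\alpha}^\top\bm{M}\bm{\alpha}>0$.

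\emph{Step 2: consistency of the eigenvalues.} The model here is the rank-one regression with $\bm{\beta}$ in place of $\bm{\gamma}_1$, but $\bm{S}_E$ does not depend on the mean structure. Indeed, $\bm{I}_n-n^{-1}\bm{1}_n\bm{1}_n^\top-\bm{X}_n(\bm{X}_n^\top\bm{X}_n)^{-1}\bm{X}_n^\top$ annihilates both $\bm{1}_n$ and $\bm{X}_n$. So $\bm{S}_E\sim W_p(n-q-1,\bm{\Sigma})$. It follows that $\E\|\hat{\bm{\Sigma}}-\bm{\Sigma}\|_{\mathrm F}^2=O(n^{-1})$; equivalently, one can use Lemma~\ref{lem1} with $w=1$ and divide by $(n-q-1)^2$. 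Hence $\hat{\bm{\Sigma}}\stackrel{\mathrm{p}}{\to}\bm{\Sigma}$. Ordered eigenvalues are Lipschitz in the matrix by Weyl's inequality, so $\hat{\lambda}_j\stackrel{\mathrm{p}}{\to}\lambda_j$ for every $j$. This remains valid when the minor eigenvalues coincide, since only the ordered eigenvalues enter, not the eigenvectors.

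\emph{Step 3: consistency of $\hat{c}_\infty$.} Under the rank-one model, $\bm{S}_R$ has mean $q\bm{\Sigma}+c\,\bm{\beta}\bm{\beta}^\top$, using the same computation as for $\E[\bm{S}_R]$ in Section~\ref{sec:2}. Since $\|\bm{\beta}\|=1$, this gives $\E[\tr(\bm{S}_R)]=q\,\tr(\bm{\Sigma})+c$. Its variance is $O(c+1)=O(n)$, since $\bm{S}_R$ is a noncentral Wishart with $q$ degrees of freedom and noncentrality of order $c$. Consequently $n^{-1}\tr(\bm{S}_R)-n^{-1}c=O_{\mathrm{p}}(n^{-1/2})$. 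In addition, $n^{-1}q\,\tr(\hat{\bm{\Sigma}})=O_{\mathrm{p}}(n^{-1})$ and $c/n\to c_\infty$, which yields $\hat{c}_\infty\stackrel{\mathrm{p}}{\to}c_\infty$. Note that this step does not use $\mathcal{H}_0$.

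The main obstacle is Step 3, specifically the variance bound for $\tr(\bm{S}_R)$ under a general $\bm{\beta}$. I would handle it by writing $\bm{S}_R=\bm{Z}^\top\bm{Z}$, where $\bm{Z}=(\bm{X}_n^\top\bm{X}_n)^{-1/2}\bm{X}_n^\top\bm{Y}$ is a $q\times p$ Gaussian matrix with mean $(\bm{X}_n^\top\bm{X}_n)^{1/2}\bm{\alpha}\bm{\beta}^\top$ and rows i.i.d.\ with covariance $\bm{\Sigma}$. Then $\tr(\bm{S}_R)=\|\bm{Z}\|_{\mathrm F}^2$ is a quadratic form in Gaussians. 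Its variance is $2\,\tr(\cdot^2)$ plus $4\times$ a term linear in the squared mean, and here that term is bounded by $4\lambda_1 c$. Once this bound is in place, the remaining steps are routine.
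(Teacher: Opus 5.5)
Your proposal is correct and follows essentially the same route as the paper: consistency of the ordered eigenvalues from $\hat{\bm{\Sigma}}\stackrel{\mathrm{p}}{\to}\bm{\Sigma}$, consistency of $\hat{c}_\infty$ from the mean and $O_{\mathrm{p}}(n^{-1/2})$ fluctuation of $n^{-1}\tr(\bm{S}_R)$, the continuous mapping theorem for $\hat{V}_k$, and a finite-sum bound against the $\chi^2_{1,k}$. Two details the paper leaves implicit are made explicit in your version: you check that $c_\infty>0$, and you bound $\Var[\tr(\bm{S}_R)]$ for a general unit vector $\bm{\beta}$ rather than appealing to the proof of Theorem~\ref{thm1}, which assumes $\bm{\beta}=\bm{\gamma}_1$; the latter matters because the lemma is also needed under $\Hc_1$ in Theorem~\ref{thm3}(ii).
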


Let $q(\alpha_0)$ and $\hat{q}(\alpha_0)$ be the upper $\alpha_0$-quantile of the distribution of
\[
\sum_{k=2}^p \left\{ \frac{  \lambda_1}{ (\lambda_1 - \lambda_k)^2}  + \frac{1}{c_\infty} \right\} \lambda_k  \chi^2_{1,k} \quad \text{and} \quad
\sum_{k=2}^p \hat{V}_k \chi^2_{1,k},
\]
respectively.
Based on a test statistic
\[ T_n = n \|\hat{\bm{\gamma}}_1(1) - \hat{\bm{\gamma}}_1(0) \|^2 , \]
consider a test function
\[
\varphi_n = 1\{ T_n > \hat{q} (\alpha_0) \} ,
\]
where $1\{ \cdot \}$ denotes the indicator function.
The following theorem shows that the test $\varphi_n$ asymptotically controls the size and is consistent.

\begin{thm}\label{thm3}
Let Assumption~\ref{ass1} hold.
\begin{itemize}
\item[(i)]
Under $\Hc_0$, $\lim_{n\to\infty} \Pr(\varphi_n=1) = \alpha_0$.
\item[(ii)]
Under $\Hc_1$, $\lim_{n\to\infty} \Pr(\varphi_n=1) = 1$.
\end{itemize}
\end{thm}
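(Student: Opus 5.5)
For part (i), I will reduce $T_n$ to the statistic in Corollary~\ref{cor1}. Under $\Hc_0$ we have $\bm{\beta}=\bm{\gamma}_1$ (the sign is fixed by $\bm{\beta}^\top\bm{\gamma}_1\ge 0$), so the model is the multivariate allometric regression model and Corollary~\ref{cor1} applies. The continuous mapping theorem then gives
\[
T_n \Rightarrow T := \sum_{k=2}^p V_k \chi^2_{1,k}, \qquad V_k=\Bigl\{\frac{\lambda_1}{(\lambda_1-\lambda_k)^2}+\frac{1}{c_\infty}\Bigr\}\lambda_k .
\]
This follows by diagonalising the limiting Gaussian covariance in the basis $\bm{\gamma}_2,\ldots,\bm{\gamma}_p$. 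Since $V_k>0$ for every $k\ge 2$ and $p\ge 2$, the law of $T$ is continuous and has a strictly increasing distribution function on $(0,\infty)$, so $q(\alpha_0)$ is a continuity point.

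Next I show $\hat q(\alpha_0)\stackrel{\mathrm{p}}{\to} q(\alpha_0)$. By Lemma~\ref{lem2}, $\sum_k \hat V_k\chi^2_{1,k}$ (with the $\chi^2$'s independent of the data) is close in probability to $T$. Equivalently, and more directly, $\hat V_k\stackrel{\mathrm{p}}{\to}V_k$, which follows from consistency of $\hat\lambda_k$ for $\hat{\bm\Sigma}$, of $\hat c_\infty$, and the continuous mapping theorem. The map from the weight vector $(v_2,\ldots,v_p)$ to the upper $\alpha_0$-quantile of $\sum v_k\chi^2_{1,k}$ is continuous at $(V_2,\ldots,V_p)$, because the distribution functions converge uniformly and the limit is strictly increasing. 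Hence $\hat q(\alpha_0)\stackrel{\mathrm{p}}{\to} q(\alpha_0)$. Slutsky's lemma then gives $T_n-\hat q(\alpha_0)\Rightarrow T-q(\alpha_0)$, so $\Pr(\varphi_n=1)\to\Pr(T>q(\alpha_0))=\alpha_0$.

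For part (ii), $\bm{\beta}\ne\bm{\gamma}_1$ is fixed. Then $\E[\bm{S}_E]/(n-q-1)=\bm{\Sigma}$, so $\hat{\bm{\gamma}}_1(1)\stackrel{\mathrm{p}}{\to}\bm{\gamma}_1$ by the usual perturbation bound (Davis--Kahan, or the argument behind Lemma~\ref{lem1}, which only uses $\bm{S}_E$ when $w=1$ and is unaffected by $\bm{\beta}$). Meanwhile $n^{-1}\bm{S}_R\stackrel{\mathrm{p}}{\to} c_\infty\bm{\beta}\bm{\beta}^\top$, because $\bm{S}_R$ has mean $q\bm{\Sigma}+c\,\bm{\beta}\bm{\beta}^\top$ with fluctuations $O_p(n^{1/2})$ and $c_\infty>0$. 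The leading eigenvalue of this limit is simple, so $\hat{\bm{\gamma}}_1(0)\stackrel{\mathrm{p}}{\to}\pm\bm{\beta}$.

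Because $\bm{\beta}\neq\pm\bm{\gamma}_1$ are distinct unit vectors, with the sign convention $\hat{\bm\gamma}_1(0)^\top\bm\gamma_1\ge0$ handled either way, $\|\hat{\bm{\gamma}}_1(1)-\hat{\bm{\gamma}}_1(0)\|^2$ converges in probability to a positive constant: either $2(1-\bm{\beta}^\top\bm{\gamma}_1)$ or $2(1+\bm{\beta}^\top\bm{\gamma}_1)$. Hence $T_n\stackrel{\mathrm{p}}{\to}\infty$. The remaining task is to show $\hat q(\alpha_0)=O_p(1)$ under $\Hc_1$; then $\Pr(\varphi_n=1)\to1$.

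I expect this last step to be the main obstacle. Under $\Hc_1$, $\hat\lambda_k$ is still consistent, but $\hat c_\infty$ must be checked separately. I plan to use
\[
\E[\tr \bm{S}_R]=q\,\tr\bm\Sigma+c ,
\]
so $\hat c_\infty\stackrel{\mathrm{p}}{\to}c_\infty>0$ still. Thus $\hat V_k\stackrel{\mathrm{p}}{\to}V_k<\infty$, and $\hat q(\alpha_0)$ converges to the finite $q(\alpha_0)$. Some care is needed if $\lambda_k$ ties make $\hat\lambda_1-\hat\lambda_k$ small, but $\lambda_1>\lambda_2$ keeps this gap bounded away from zero with probability tending to one.
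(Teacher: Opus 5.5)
Your proposal is correct and follows the paper's proof. Part (i) combines weak convergence of $T_n$ (from Corollary~\ref{cor1}/\ref{cor2}), $\hat{q}(\alpha_0)\stackrel{\mathrm{p}}{\to} q(\alpha_0)$ via Lemma~\ref{lem2}, and Slutsky; part (ii) shows $T_n\to\infty$ while $\hat{q}(\alpha_0)$ stays bounded. The only difference is in (ii): the paper uses the $\sqrt{n}$-expansion $T_n = 2n(1-\bm{\gamma}_1^\top\bm{\beta})+O_P(n^{1/2})$, whereas you use only consistency of $\hat{\bm{\gamma}}_1(1)$ and $\hat{\bm{\gamma}}_1(0)$, which suffices and also cleanly handles the sign ambiguity of $\hat{\bm{\gamma}}_1(0)$ when $\bm{\beta}\perp\bm{\gamma}_1$.
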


\begin{rem}
In light of Corollary~\ref{cor2}, the proposed test statistic evaluates the geometric discrepancy between two directions by utilizing two sum-of-squares matrices.
Although it is formally possible to construct a Wald-type test statistic using the Moore--Penrose inverse of the estimated asymptotic covariance matrix as in \cite{RefBFN99}, such an approach faces practical challenges in our setting.
Specifically, since the gaps among the minor eigenvalues $\lambda_2,\ldots,\lambda_p$ are not assumed, the individual eigenvectors $\bm{\gamma}_2,\ldots,\bm{\gamma}_p$ are not necessarily identifiable, which can make the direct plug-in estimation of the asymptotic covariance matrix unstable.
This motivates our use of the proposed statistic, which avoids the direct estimation of eigenvectors $\bm{\gamma}_2,\ldots,\bm{\gamma}_p$.
A similar motivation, namely avoiding the instability of eigenvectors under small gaps among the minor eigenvalues, appears in the test for partial common principal component subspaces of \citet{RefS03}, where the test statistic is also constructed so that its asymptotic distribution is given by a weighted sum of chi-squared variables.
\end{rem}

To calculate $\hat{q}(\alpha_0)$, we use a moment-matching method based on the first two moments, which provides a computationally efficient approximation.
Other approximations for linear combinations of chi-squared variables are available but not pursued here.
Let
\[
K = \sum_{k=2}^p \left\{ \frac{  \lambda_1}{ (\lambda_1 - \lambda_k)^2}  + \frac{1}{c_\infty} \right\} \lambda_k  \chi^2_{1,k}.
\]
We approximate the distribution of $K$ by matching its first two moments with a scaled chi-squared distribution.
The mean and variance of $K$ are given by
\[
\E[K] = \sum_{k=2}^p \left\{ \frac{  \lambda_1}{ (\lambda_1 - \lambda_k)^2}  + \frac{1}{c_\infty} \right\} \lambda_k 
\quad \text{and} \quad
\Var[K] = 2 \sum_{k=2}^p \left\{ \frac{  \lambda_1}{ (\lambda_1 - \lambda_k)^2}  + \frac{1}{c_\infty} \right\}^2 \lambda_k^2 ,
\]
respectively.
Let $\tilde{K}$ be a random variable satisfying $\tilde{K}/\kappa \sim \chi^2(\phi)$, where
\[
\kappa =  \frac{\Var[K]}{2 \E[K]}, \quad \phi = \frac{2(\E[K])^2}{\Var[K]}.
\]
By construction, we have $\E[\tilde{K}] = \E[K]$ and $\Var[\tilde{K}] = \Var[K]$.
Thus, we approximate the distribution of $K$ by that of $\tilde{K}$, implying that $q(\alpha_0)$ is approximated by $\kappa$ multiplied by the upper $\alpha_0$-quantile of $\chi^2(\phi)$.
In practice, as $\kappa$ and $\phi$ are unknown, we construct their plug-in estimators by utilizing $\hat{V}_2, \ldots, \hat{V}_p$ defined in \eqref{Vhat} and letting
\[
\hat{W}_k = \left\{ \frac{ \hat{\lambda}_1}{ (\hat{\lambda}_1 - \hat{\lambda}_k)^2}  + \frac{1}{\hat{c}_\infty} \right\}^2 \hat{\lambda}_k^2  \quad (k \in \{2, \ldots, p\}).
\]
The plug-in estimators for $\kappa$ and $\phi$ are given by
\[
\hat{\kappa} = \frac{ \sum_{k=2}^p \hat{W}_k }{\sum_{k=2}^p \hat{V}_k}, \quad
\hat{\phi} = \frac{(\sum_{k=2}^p \hat{V}_k)^2}{\sum_{k=2}^p \hat{W}_k }.
\]
When $p=2$, $\hat{\phi} = 1$, so the moment-matching approximation reduces to the plug-in approximation.
The approximate $P$-value is calculated based on the $\chi^2$ approximation for the distribution of $T_n/\hat{\kappa}$, where the degrees of freedom $\hat{\phi}$ and the scaling factor $\hat{\kappa}$ are evaluated at their realized values.

\section{Numerical experiments}\label{sec:5}
In this section, we evaluate the performance of the proposed test $\varphi_n$ through numerical experiments.
For comparison, we consider a Wald-type test as an alternative approach.
Let
\[
\tilde{T}_n = n (\hat{\bm{\gamma}}_1(1) - \hat{\bm{\gamma}}_1(0))^\top \hat{\bm{\Upsilon}}^{+} (\hat{\bm{\gamma}}_1(1) - \hat{\bm{\gamma}}_1(0)) ,
\]
where $\hat{\bm{\Upsilon}}^+$ is the Moore--Penrose inverse of
\[
\hat{\bm{\Upsilon}} =  \sum_{k=2}^p \left\{ \frac{  \hat{\lambda}_1}{ ( \hat{\lambda}_1 - \hat{\lambda}_k)^2}  + \frac{1}{\hat{c}_\infty} \right\} \hat{\lambda}_k \hat{\bm{\gamma}}_k \hat{\bm{\gamma}}_k^\top
\]
and $\hat{\bm{\gamma}}_2,\ldots,\hat{\bm{\gamma}}_p$ are the second to $p$-th eigenvectors of $\bm{S}_T$.
Although $\E[\bm{S}_E]$ and $\E[\bm{S}_T]$ share the same eigenvectors under $\Hc_0$, we employ $\bm{S}_T$ because it has larger degrees of freedom.
We then define a test
\[
\tilde{\varphi}_n = 1\{ \tilde{T}_n > q_{\chi^2(p-1)}(\alpha_0) \} ,
\]
where $q_{\chi^2(p-1)}(\alpha_0)$ is the upper $\alpha_0
$-quantile of the chi-squared distribution with $(p-1)$ degrees of freedom.
When $p=2$, the tests $\tilde{\varphi}_n$ and $\varphi_n$ coincide.

\begin{rem}
If the minor eigenvalues satisfy $\lambda_2 > \ldots > \lambda_p$, then $\hat{\bm{\gamma}}_k \stackrel{\mathrm{p}}{\to} \bm{\gamma}_k$ holds for all $k = 2, \ldots, p$, and hence
\[ 
\hat{\bm{\Upsilon}} \stackrel{\mathrm{p}}{\to} \sum_{k=2}^p \left\{ \frac{\lambda_1}{(\lambda_1 - \lambda_k)^2} + \frac{1}{c_\infty} \right\} \lambda_k \bm{\gamma}_k \bm{\gamma}_k^\top .
\]
However, if any of the minor eigenvalues coincide, individual eigenvectors $\bm{\gamma}_2, \ldots, \bm{\gamma}_p$ are no longer identifiable.
When they are tightly clustered, with finite $n$, the estimated eigenvectors may exhibit substantial variability within the corresponding eigenspace, and can occasionally align in nearly the same direction as each other.
Consequently, the Moore--Penrose inverse of $\hat{\bm{\Upsilon}}$ may become unstable in finite sample settings, which can adversely affect the behavior of the Wald-type statistic. 
\end{rem}

The simulation settings, using the notation of Section~\ref{sec:4}, are as follows:
$\alpha_0=0.05$, $n \in \{ 20,50,100,200,500 \}$, $p=5$ (fixed), $q=1$ (fixed), $\bm{\mu}=\bm{0}_5$ (fixed), $c/n = n^{-1} \| \bm{X}_n \bm{\alpha} \|^2 \in \{2, 10\}$, $\bm{\Sigma} = \diag(\bm{\lambda})$ with
\begin{align*} 
\bm{\lambda}^\top 
\in \{ (10,8,6,4,2),
(10,1,1,1,1), 
(2,1.5,1.5,1.5,1), 
(1.8,1,1,1,1), 
(1.2,1,1,1,1) \},
\end{align*}
and $\bm{\beta} = (1,0,\ldots,0)^\top$ for $\Hc_0$ or $\bm{\beta} = (\cos \theta, \sin \theta, 0, \ldots, 0)^\top$ with $\theta \in \{\pi/4, \pi/2\}$ for $\Hc_1$, where  $\bm{X}_n = (x_{n,1,1},\ldots, x_{n,1,n} )^\top$ represents either a binary or a continuous explanatory variable, depending on the setting.
In the binary case, $x_{n,1,i} = 1/2$ for $i \in \{1,\ldots,n/2\}$ and $x_{n,1,i} = -1/2$ for $i \in \{n/2 + 1,\ldots,n\}$, whereas in the continuous case, $\bm{X}_{n}$ is taken to be an equally spaced sequence on $[-1,1]$.
The considered configurations of $\bm{\lambda}$ range from well-separated eigenvalues to nearly indistinguishable ones.
The simulation is repeated 1000 times.
For the binary case, we compare $\varphi_n$, $\tilde{\varphi}_n$, and $\varphi_{\text{BFN}}$.
For the continuous case, we compare $\varphi_n$ and $\tilde{\varphi}_n$.

The empirical type I error rates are summarized in Figures~\ref{fig:size_1}--\ref{fig:size_2}, while the detailed numerical results are provided in Appendix~\ref{sec:A2}.
The main observations are as follows.
\begin{itemize}
\item 
For both $\varphi_n$ and $\tilde{\varphi}_n$, the results are qualitatively similar between the discrete and continuous cases.

\item 
The type I error rates of all methods approach $\alpha_0$ as the sample size increases, in accordance with the asymptotic theory. 
In particular, for $n \geq 50$, the type I error rates of $\varphi_n$ are close to $\alpha_0 = 0.05$, except when $\bm{\lambda}^\top = (2,1.5,1.5,1.5,1)$ or $(1.2,1,1,1,1)$, where the gaps $\lambda_1 - \lambda_2$ are considered to be not large enough.

\item 
The type I error rates of $\varphi_n$ are comparable to or more conservative than those of $\tilde{\varphi}_n$ in many cases. 
In particular, $\varphi_n$ provides more reliable control of the type I error when the minor eigenvalue gaps are small, as suggested by the theoretical properties.

\item 
In the binary case, both $\varphi_n$ and $\tilde{\varphi}_n$ are more conservative than $\varphi_{\text{BFN}}$.
\end{itemize}
Similar tendencies are also observed for the rejection rates under the alternative hypotheses; see the tables in Appendix~\ref{sec:A2}.
Specifically, in many cases, $\varphi_n$ tends to have the lowest rejection rate, followed by $\tilde{\varphi}_n$, whereas $\varphi_{\text{BFN}}$ tends to have the highest.
However, since $\varphi_{\text{BFN}}$ substantially exceeds the nominal significance level in many cases, these comparisons should be interpreted with caution.

\begin{figure}[!tbp]
\centering
\includegraphics[width=.9\linewidth]{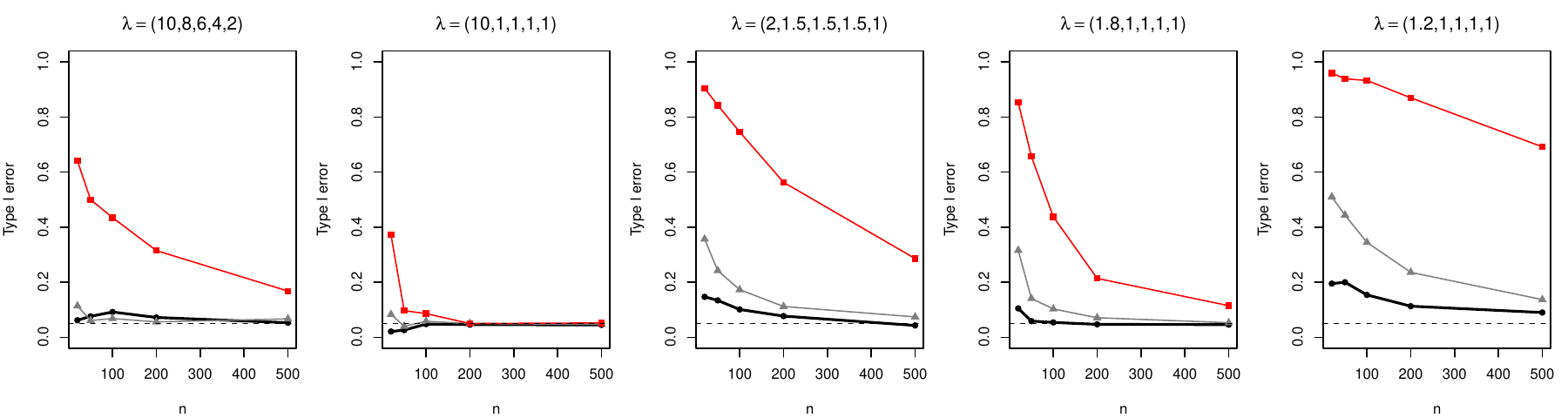}
\vspace{1em}
\includegraphics[width=.9\linewidth]{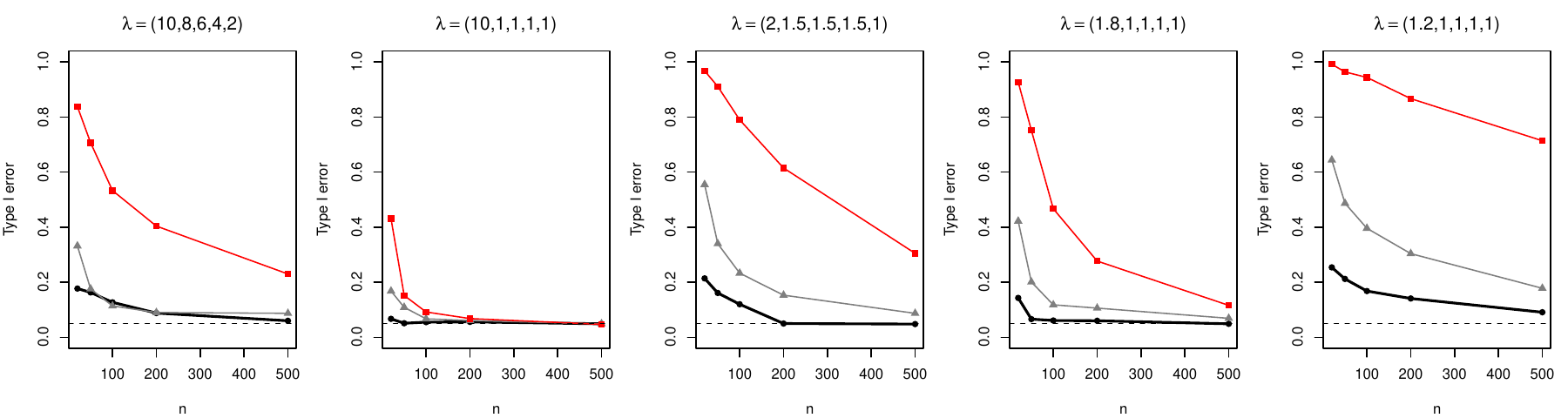}
\caption{Empirical type I error rates for the binary case with $n^{-1} \| \bm{X}_n \bm{\alpha} \|^2 = 2$ (top) and $10$ (bottom). Black circles: $\varphi_n$; gray triangles: $\tilde{\varphi}_n$; red squares: $\varphi_{\text{BFN}}$; dashed line: 0.05.}
\label{fig:size_1}
\end{figure}

\begin{figure}[!tbp]
\centering
\includegraphics[width=.9\linewidth]{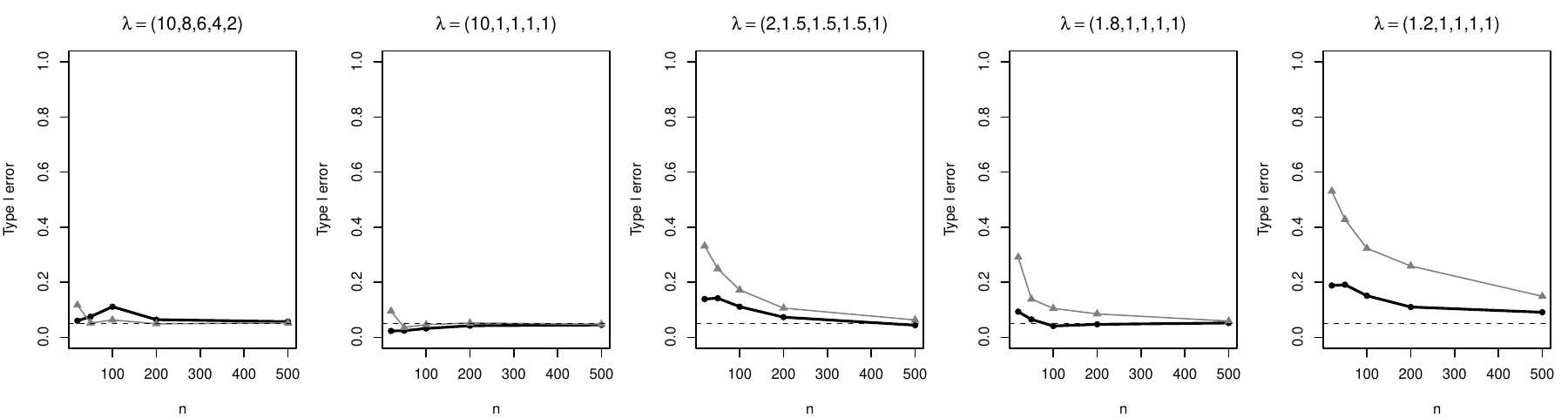}
\vspace{1em}
\includegraphics[width=.9\linewidth]{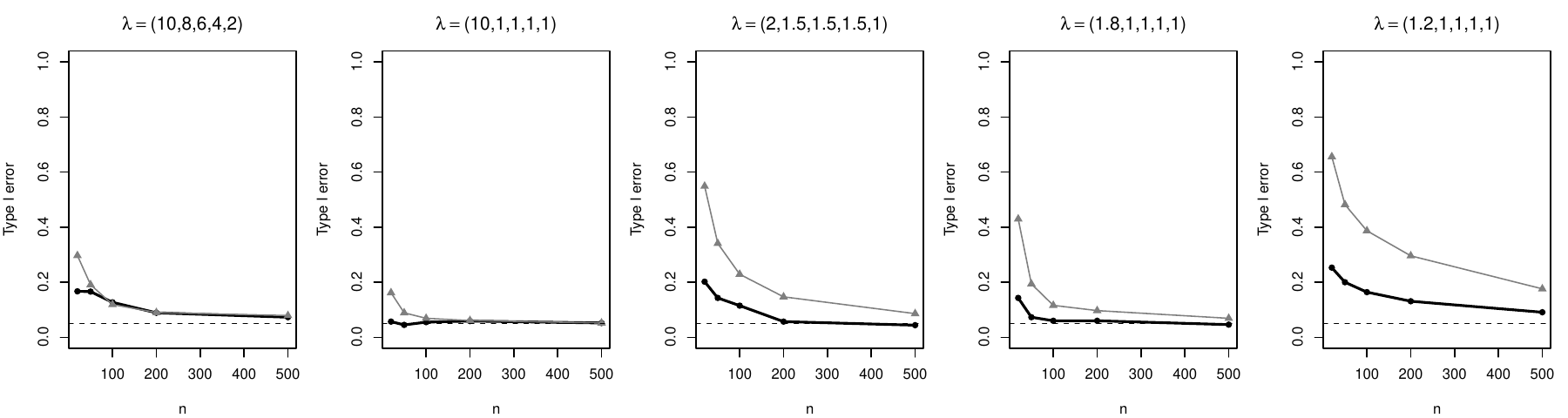}
\caption{Empirical type I error rates for the continuous case with $n^{-1} \| \bm{X}_n \bm{\alpha} \|^2 = 2$ (top) and $10$ (bottom). Black circles: $\varphi_n$; gray triangles: $\tilde{\varphi}_n$; dashed line: 0.05.}
\label{fig:size_2}
\end{figure}

These results suggest that the previous conclusions based on $\varphi_{\text{BFN}}$ in the turtle carapace analysis may be affected by inflated type I error rates. 
This motivates a reexamination of the dataset using the proposed method in the next section.

\begin{rem}
Additional numerical results under heteroscedasticity are provided in Appendix~\ref{sec:A3}.
\end{rem}

\section{Painted turtle carapace data}\label{sec:6}

In this section, we analyze the painted turtle carapace data  \citep{RefJM60} using the multivariate allometric regression model with our proposed method.
The dataset is available as the \texttt{tortues} data in the \textsc{R} package \texttt{ade4}.
With a slight abuse of notation, random variables and their realized values are not notationally distinguished throughout this section.

The response vector $\bm{y} = (y_1,y_2,y_3)^\top$ consists of the transformed carapace dimensions:
\[ 
y_1 = 10 \log(\text{length}), \quad 
y_2 = 10 \log(\text{width}), \quad 
y_3 = 10 \log(\text{height}), 
\]
and the explanatory variable $x$ is defined as an indicator variable for sex, taking the value $0$ for females and $1$ for males.
Following \citet{RefBFN99}, the ten-times logarithmic transformation is adopted for all structural measurements.
The data comprise $n_F = 24$ female and $n_M=24$ male turtles.
For $x = 0$ (female) and $x=1$ (male), the sample mean vectors of the response variables, denoted by $\bar{\bm{y}}_F$ and $\bar{\bm{y}}_M$, and the corresponding normalized difference vector $\bm{d} = \|\bar{\bm{y}}_F - \bar{\bm{y}}_M\|^{-1} (\bar{\bm{y}}_F - \bar{\bm{y}}_M)$ are given by
\[
\bar{\bm{y}}_F =
\begin{bmatrix}
49.0036 \\
46.2412 \\
39.2916
\end{bmatrix}, \quad
\bar{\bm{y}}_M =
\begin{bmatrix}
47.2544 \\
44.7757 \\
37.0422
\end{bmatrix}, \quad 
\bm{d} = 
\begin{bmatrix}
0.5459 \\
0.4573 \\
0.7020
\end{bmatrix}.
\]
The sample covariance matrices $\hat{\bm{\Sigma}}_F$ and $\hat{\bm{\Sigma}}_M$ (with the divisor $n_\cdot -1 = 23$) and their spectral decompositions are given by
\begin{align*}
&\hat{\bm{\Sigma}}_F  
= \begin{bmatrix}
2.6391 & 2.0111 & 2.5538 \\
2.0111 & 1.6231 & 1.9491 \\
2.5538 & 1.9491 & 2.8005
\end{bmatrix} = \hat{\bm{\Gamma}}_F \hat{\bm{\Lambda}}_F \hat{\bm{\Gamma}}_F^\top, \\
&\hat{\bm{\Gamma}}_F =\begin{bmatrix}
0.6164 & -0.3973 & \hphantom{-}0.6799 \\
0.4763 & -0.4994 & -0.7237 \\
0.6271 & \hphantom{-}0.7699 & -0.1185
\end{bmatrix}, \quad
\hat{\bm{\Lambda}}_F = 
\diag\left(\begin{bmatrix}
6.7913 \\
0.2183 \\
0.0530
\end{bmatrix}  \right)
\end{align*}
and
\begin{align*}
&\hat{\bm{\Sigma}}_M
= \begin{bmatrix}
1.1072 & 0.8019 & 0.8162 \\
0.8019 & 0.6417 & 0.6005 \\
0.8162 & 0.6005 & 0.6767
\end{bmatrix}  = \hat{\bm{\Gamma}}_M \hat{\bm{\Lambda}}_M \hat{\bm{\Gamma}}_M^\top,\\
&\hat{\bm{\Gamma}}_M = 
\begin{bmatrix}
0.6832 & -0.1512 & \hphantom{-}0.7144 \\
0.5102 & -0.6011 & -0.6151 \\
0.5225 & \hphantom{-}0.7847 & -0.3335
\end{bmatrix}, \quad
\hat{\bm{\Lambda}}_M =
\diag\left(\begin{bmatrix}
2.3303 \\
0.0595 \\
0.0358
\end{bmatrix}  \right),
\end{align*}
respectively.
As in \citet{RefJ63}, the first principal eigenvectors for both sexes are understood to represent the size factor.
These values differ from those reported by \citet{RefBFN99}, which is likely attributable to differences in data formatting or rounding errors in the computational processes.
For both sexes, the estimated eigenvalues exhibit a prominent dominance of the first component, such that $\hat{\lambda}_1 \gg \hat{\lambda}_2 > \hat{\lambda}_3$. 
Indeed, the contribution ratios of the first principal component exceed 0.95 for both female and male turtles.
In particular, for male turtles, there is a characteristic $\hat{\lambda}_2 \approx \hat{\lambda}_3$.
The leading principal eigenvectors of $\hat{\bm{\Sigma}}_F$ and $\hat{\bm{\Sigma}}_M$ are denoted by $\hat{\bm{\gamma}}_{F,1}$ and $\hat{\bm{\gamma}}_{M,1}$, respectively.
The mutual inner product between these two eigenvectors and their respective inner products with $\bm{d}$ are
\[
\hat{\bm{\gamma}}_{F,1}^\top \hat{\bm{\gamma}}_{M,1} = 0.9917, \quad
\bm{d}^\top \hat{\bm{\gamma}}_{F,1} = 0.9945, \quad
\bm{d}^\top \hat{\bm{\gamma}}_{M,1} = 0.9731.
\]
Although the inner product $\bm{d}^\top \hat{\bm{\gamma}}_{M,1}$ is the smallest among these values, all three metrics are close to unity, indicating a high degree of collinearity among the vectors.

By applying the hybrid PCA framework, the estimated principal eigenvectors for $w=0, 1/2$, and $1$ are
\[
\hat{\bm{\gamma}}_1(0) = \begin{bmatrix}
0.5459 \\
0.4573 \\
0.7020
\end{bmatrix}, \quad
\hat{\bm{\gamma}}_1(1/2) = 
\begin{bmatrix}
0.6025 \\
0.4759 \\
0.6407
\end{bmatrix}, \quad
\hat{\bm{\gamma}}_1(1) = 
\begin{bmatrix}
0.6345 \\
0.4858 \\
0.6012
\end{bmatrix}.
\]
By definition, $\hat{\bm{\gamma}}_1(0) = \bm{d}$. 
Furthermore, due to $n_F = n_M$, the squared distance between the mean vectors satisfies the exact algebraic relation $\|\bar{\bm{y}}_F - \bar{\bm{y}}_M \|^2 = (4/n) \tr(\hat{\bm{\Sigma}}_R)$.
Notably, $\hat{\bm{\gamma}}_1(0)^\top \hat{\bm{\gamma}}_1(1) = 0.9906$, which is close to unity.
This indicates that the regression direction and the principal component directions are nearly aligned.
Moreover, the estimated covariance matrix and its spectral decomposition are given by
\begin{align*}
&\hat{\bm{\Sigma}} = 
\begin{bmatrix}
1.8732 & 1.4065 & 1.6850 \\
1.4065 & 1.1324 & 1.2748 \\
1.6850 & 1.2748 & 1.7386
\end{bmatrix} 
= \hat{\bm{\Gamma}} \hat{\bm{\Lambda}} \hat{\bm{\Gamma}}^\top\\
& \hat{\bm{\Gamma}} =
\begin{bmatrix}
0.6345 & -0.3826 & \hphantom{-}0.6716 \\
0.4858 & -0.4785 & -0.7315 \\
0.6012 & \hphantom{-}0.7904 & -0.1177
\end{bmatrix}, \quad 
\hat{\bm{\Lambda}} = 
\diag\left(\begin{bmatrix}
4.5469 \\
0.1513 \\
0.0460
\end{bmatrix}  \right),
\end{align*}
from which we observe that the components of $\hat{\bm{\gamma}}_1(1)$ fall within the respective intervals spanned by the corresponding components of $\hat{\bm{\gamma}}_{F,1}$ and $\hat{\bm{\gamma}}_{M,1}$. 
Consequently, $\hat{\bm{\gamma}}_1(0)^\top \hat{\bm{\gamma}}_1(1) = \bm{d}^\top \hat{\bm{\gamma}}_1(1)$ also lies between the values of $\bm{d}^\top \hat{\bm{\gamma}}_{F,1}$ and $\bm{d}^\top \hat{\bm{\gamma}}_{M,1}$.
The test statistics are
\[
T_n = 11.1155\hat{\kappa}, \quad
\tilde{T}_n = 12.2423 ,
\]
with the corresponding approximate $P$-values being $0.0021$ and $0.0022$, respectively, where $\hat{\kappa} = 0.0812$ and $\hat{\phi}= 1.5491$. 
Consequently, the parallelism hypothesis $\mathcal{H}_0$ is rejected in either test.
This rejection, despite the large inner product $\hat{\bm{\gamma}}_1(0)^\top \hat{\bm{\gamma}}_1(1) = 0.9906$, suggests that even small departures from parallelism may be statistically detectable in this setting.
This can be attributed to the fact that $\hat{\lambda}_2=0.1513$ and $\hat{\lambda}_3=0.0460$ are small relative to $\hat{\lambda}_1 = 4.5469$ and $\hat{c}_\infty = 2.4679$, leading to small standard errors of $\hat{\bm{\gamma}}_1(0)$ and $\hat{\bm{\gamma}}_1(1)$.
Consistent with \citet{RefBFN99}, we find evidence to reject the parallelism; however, this statistical rejection may not necessarily imply that the departure is biologically meaningful.

To assess the above results in small sample situations, we also conduct auxiliary analyses using two sets of systematically subsampled datasets: one retaining only specimens with even or odd indices ($n=24$, that is, 12 of each sex), and the other using the three residue classes modulo three ($n=16$, that is, 8 of each sex). 
The results are summarized in Table~\ref{tab:subsampling}.
These results show that the test conclusions may vary substantially under systematic subsampling with small sizes, highlighting the sensitivity of the inference in regimes with small minor eigenvalues.
This variation reflects that higher inner products tend to yield smaller test statistics, whereas slight deviations lead to larger values.

\begin{table}[tbp]
\caption{Analyses under systematic subsampling of the painted turtle carapace data.}
\label{tab:subsampling}
\centering
\begin{small}
\begin{tabular}{lcccccc}
\hline
Subsample Design & $n$ & $\hat{\bm{\gamma}}_1(0)^\top \hat{\bm{\gamma}}_1(1)$ & $T_n/\hat{\kappa}$ & $\tilde{T}_n$ & $P$-value ($T_n$) & $P$-value ($\tilde{T}_n$) \\
\hline
Full data & 48 & 0.9906 & 11.1155 & 12.2423 & 0.0021 & 0.0022 \\
\hline
Even IDs & 24 & 0.9969 & \hphantom{0}0.2755 & \hphantom{0}1.5889 & 0.3695 & 0.4518 \\
Odd IDs  & 24 & 0.9815 & 18.1055 & 15.8781 & $<0.0001$ & 0.0004 \\
\hline
ID $\equiv$ 0 (mod 3) & 16 & 0.9768 & 24.6772 & 25.7305 & $<0.0001$ & $<0.0001$ \\
ID $\equiv$ 1 (mod 3) & 16 & 0.9979 & \hphantom{0}0.2497 & \hphantom{0}1.3543 & 0.6834 & 0.5081 \\
ID $\equiv$ 2 (mod 3)  & 16 & 0.9896 & \hphantom{0}8.1640 & \hphantom{0}7.6338 & 0.0158 & 0.0220 \\
\hline
\end{tabular}
\end{small}
\end{table}

\begin{rem}
\cite{RefBFN99} also analyze the data with the response variables $y_1$ and $y_2$, and conclude that $\mathcal{H}_0$ is not rejected.
Under the same setting, our test statistics are
\[
T_n = 1.4251\hat{\kappa}, \quad
\tilde{T}_n = 1.4252 ,
\]
with the corresponding approximate $P$-values of $0.2326$ and $0.2327$, respectively, where $\hat{\kappa} = 0.0558$. 
The slight difference between $T_n/\hat{\kappa}$ and $\tilde{T}_n$ is due to numerical error.
\end{rem}

\section{Asymptotic normality in weak identifiability regimes}\label{sec:7}

Although the largest eigenvalue of $\bm{\Sigma}$ typically exhibits a dominant contribution ratio in allometry, we investigate the asymptotic distribution of the sample first principal eigenvector under a weak identifiability regimes~\citep{RefPRV20}.
Such a situation may arise in allometric regression when explanatory variables account for a substantial portion of the total variation.
Specifically, we consider the following setting, which is more restricted than that of \cite{RefTM25}.

\begin{ass}\label{ass2}
As $n\to\infty$ with $p,q,\bm{\alpha}$ fixed,
\begin{enumerate}
\item 
$\bm{X}_n \bm{\alpha} \neq \bm{0}$, and $n^{-1} \bm{X}_n^\top \bm{X}_n$ converges to a positive-definite matrix;
\item 
$\bm{\Sigma}$ depends on $n$ such that $\lambda_1 \to \lambda_{\infty} \in (0,\infty)$ and  $\lambda_1 - \lambda_p \asymp n^{-\xi}$ with some $\xi \in (0,\infty)$.
\end{enumerate}
\end{ass}

\begin{rem}
Unlike \cite{RefTM25}, where $\bm{X}_n \bm{\alpha} \neq \bm{0}_n$ is not necessarily assumed, we require this condition to show the asymptotic normality in a unified manner for all $w \in [0,1)$.
Moreover, to characterize the limiting distribution, the asymptotic behaviors of
$\lambda_1,\ldots,\lambda_p$ need to be specified.
Under Assumption~\ref{ass2}, it follows that $\lambda_k \to \lambda_{\infty}$ for all $k = 1 , \ldots, p$.
\end{rem}

Even when $\lambda_1- \lambda_2 \to 0$, the asymptotic normality presented in the following theorem holds for $w \in [0, 1)$ by an argument closely paralleling that of Theorem~\ref{thm1}, because the effective eigenvalue gap satisfies $\lambda_1(w) - \lambda_2(w) \sim n c_\infty \asymp n$.
This highlights that the hybrid PCA framework with $w<1$ yields asymptotic normality even when the leading principal direction is not asymptotically identifiable.
The case of $w=1$ is essentially identical to the ordinary PCA for an i.i.d.~sample, where $\lambda_1 - \lambda_2 \asymp n^{- \eta}$ with $\eta < 1/2$ is necessary and sufficient for the asymptotic normality; see Lemma 2.3 of \citet{RefPRV20}. 
We thus omit the details here.

\begin{thm}\label{thm2}
Let Assumption~\ref{ass2} hold.
For $w \in [0,1)$, as $n\to\infty$, 
\[
n^{1/2} (\hat{\bm{\gamma}}_1(w) - \bm{\gamma}_1)
\Rightarrow
\mathrm{N}_p \left( \bm{0}_p ,  \left\{ \left(\frac{w}{1-w} \frac{\lambda_{\infty}}{c_\infty} \right)^2 + \frac{\lambda_{\infty}}{c_\infty} \right\} (\bm{I}_p - \bm{\gamma}_1 \bm{\gamma}_1^\top) \right) .
\]
\end{thm}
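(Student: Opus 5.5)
We follow the perturbation argument behind Theorem~\ref{thm1}, now with $\bm{\Sigma}$ depending on $n$. Write $\bm{S}(w)=\E[\bm{S}(w)]+\bm{\Delta}_n(w)$. By Lemma~\ref{lem1}, the top eigenvalue of $\E[\bm{S}(w)]$ is $\lambda_1(w)$ with eigenvector $\bm{\gamma}_1$. The effective gap is
\[
\lambda_1(w)-\lambda_2(w)=\{q(1-w)+(n-1-q)w\}(\lambda_1-\lambda_2)+(1-w)c .
\]
Since $w<1$ and $c/n\to c_\infty>0$, this gap equals $n(1-w)c_\infty\{1+o(1)\}$, even though $\lambda_1-\lambda_2\to0$.

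Also by Lemma~\ref{lem1}, $\E\|\bm{\Delta}_n(w)\|_{\mathrm F}^2=O(n)$, because $a$ and $b$ stay bounded when $\lambda_k\to\lambda_\infty$. Hence $\|\bm{\Delta}_n(w)\|=O_p(n^{1/2})$, which is small relative to the gap. A first-order expansion of the leading eigenvector (Davis--Kahan plus the resolvent expansion) then gives
\[
\hat{\bm{\gamma}}_1(w)-\bm{\gamma}_1=\sum_{k=2}^p\frac{\bm{\gamma}_k^\top\bm{\Delta}_n(w)\bm{\gamma}_1}{\lambda_1(w)-\lambda_k(w)}\,\bm{\gamma}_k+O_p(n^{-1}).
\]
The remainder is $O_p(\|\bm{\Delta}_n\|^2/\text{gap}^2)$, and the component along $\bm{\gamma}_1$ is of the same order. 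Every denominator equals $n(1-w)c_\infty\{1+o(1)\}$ uniformly in $k$. Therefore
\[
n^{1/2}(\hat{\bm{\gamma}}_1(w)-\bm{\gamma}_1)=\frac{1}{(1-w)c_\infty}\,n^{-1/2}(\bm{I}_p-\bm{\gamma}_1\bm{\gamma}_1^\top)\bm{\Delta}_n(w)\bm{\gamma}_1+o_p(1).
\]

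It remains to find the limit law of $n^{-1/2}\bm{\gamma}_k^\top\bm{\Delta}_n(w)\bm{\gamma}_1$ for $k\ge2$. We decompose $\bm{Y}$ into its mean part plus $\bm{E}$. With $\bm{H}=\bm{X}_n(\bm{X}_n^\top\bm{X}_n)^{-1}\bm{X}_n^\top$, we have $\bm{H}\bm{X}_n\bm{\alpha}=\bm{X}_n\bm{\alpha}$, and the centering removes $\bm{\mu}$. Hence
\[
\bm{S}_R=c\,\bm{\gamma}_1\bm{\gamma}_1^\top+\bm{\gamma}_1\bm{\alpha}^\top\bm{X}_n^\top\bm{E}+\bm{E}^\top\bm{X}_n\bm{\alpha}\bm{\gamma}_1^\top+\bm{E}^\top\bm{H}\bm{E},
\]
while $\bm{S}_E=\bm{E}^\top(\bm{I}_n-n^{-1}\bm{1}\bm{1}^\top-\bm{H})\bm{E}$ contains no mean term.

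Set $\bm{z}_k=\bm{E}\bm{\gamma}_k$. These vectors are independent $\mathrm N_n(\bm{0},\lambda_k\bm{I}_n)$, because the $\bm{\gamma}_k$ are orthonormal. Only the cross term $\bm{z}_k^\top\bm{X}_n\bm{\alpha}$ from $\bm{S}_R$ contributes at order $n^{1/2}$. The term $\bm{z}_k^\top\bm{H}\bm{z}_1$ has rank-$q$ structure and is $O_p(1)$. The term $\bm{z}_k^\top(\bm{I}-n^{-1}\bm{1}\bm{1}^\top-\bm{H})\bm{z}_1$ has mean zero and variance $(n-1-q)\lambda_1\lambda_k$. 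Thus
\[
n^{-1/2}\bm{\gamma}_k^\top\bm{\Delta}_n(w)\bm{\gamma}_1=(1-w)n^{-1/2}\bm{z}_k^\top\bm{X}_n\bm{\alpha}+w\,n^{-1/2}\bm{z}_k^\top\bm{P}\bm{z}_1+o_p(1),
\]
where $\bm{P}=\bm{I}_n-n^{-1}\bm{1}\bm{1}^\top-\bm{H}$.

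Conditional on $\bm{z}_1$, these terms are jointly Gaussian in $(\bm{z}_k)_{k\ge2}$ and independent across $k$. The conditional variance of the $k$-th term is
\[
\lambda_k n^{-1}\|(1-w)\bm{X}_n\bm{\alpha}+w\bm{P}\bm{z}_1\|^2 .
\]
Because $\bm{P}\bm{X}_n\bm{\alpha}=\bm{0}$, the cross term vanishes and this variance equals $\lambda_k\{(1-w)^2c/n+w^2\|\bm{P}\bm{z}_1\|^2/n\}$. This converges in probability to $\lambda_\infty\{(1-w)^2c_\infty+w^2\lambda_\infty\}$. Conditional normality with a deterministic limiting variance yields unconditional joint normality, with covariance $\lambda_\infty\{(1-w)^2c_\infty+w^2\lambda_\infty\}$ times the identity on $\mathrm{span}(\bm{\gamma}_2,\dots,\bm{\gamma}_p)$.

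Dividing by $(1-w)^2c_\infty^2$ gives the covariance $\{(w\lambda_\infty/((1-w)c_\infty))^2+\lambda_\infty/c_\infty\}$ times $\sum_{k\ge2}\bm{\gamma}_k\bm{\gamma}_k^\top=\bm{I}_p-\bm{\gamma}_1\bm{\gamma}_1^\top$, as claimed. This covariance form does not depend on the arbitrary choice of $\bm{\gamma}_2,\dots,\bm{\gamma}_p$, which is essential here since those eigenvectors are not identifiable.

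The main obstacle is that $\bm{\gamma}_1=\bm{\gamma}_1^{(n)}$ itself moves with $n$, so the statement has to be read as convergence of a triangular array. The limit law must be shown to be invariant to the choice of $\bm{\gamma}_1^{(n)}$. We plan to handle this by the conditional-Gaussian argument above: it is exact for each $n$, the target covariance is rotation-invariant on $\bm{\gamma}_1^\perp$, and it is paired with the uniform (in $k$) remainder control from Lemma~\ref{lem1}.
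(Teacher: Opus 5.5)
Your proposal is correct, and its skeleton is the one the paper intends. The paper gives no separate proof of Theorem~\ref{thm2}; it only says that the argument of Theorem~\ref{thm1} carries over because the effective gap is of order $n$. Your expression $n(1-w)c_\infty\{1+o(1)\}$ is the accurate form of that gap; the paper's ``$\sim nc_\infty$'' omits the factor $1-w$, which is what produces the $(1-w)^{-2}$ in the limit. Your first-order expansion with an $O_P(n^{-1})$ remainder therefore matches the paper's characteristic-equation argument.

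The genuine difference is the distributional step. The paper's proof of Theorem~\ref{thm1} proceeds as follows:
\begin{itemize}
\item it uses stochastic representations of $\bm{S}_R$ and $\bm{S}_E$ and their independence;
\item it applies an i.i.d.\ CLT to the $\bm{S}_E$ part and then the Cram\'er--Wold device;
\item under Assumption~\ref{ass2} this needs a small extra rescaling step, since the summands in the $\bm{S}_E$ part change with $n$.
\end{itemize}
You instead merge both pieces into $n^{-1/2}\bm{z}_k^\top\bm{v}$ with $\bm{v}=(1-w)\bm{X}_n\bm{\alpha}+w\bm{P}\bm{z}_1$. You use $\bm{P}\bm{X}_n\bm{\alpha}=\bm{0}$ to remove the cross term and condition on $\bm{z}_1=\bm{E}\bm{\gamma}_1$. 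This gives exact conditional normality with independent coordinates and a variance that converges in probability. No CLT and no independence of $\bm{S}_R$ and $\bm{S}_E$ are needed.

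Your route buys a clean treatment of the $n$-dependent $\bm{\Sigma}$. It also makes explicit two points the paper leaves implicit:
\begin{itemize}
\item The limit is invariant to the non-identifiable choice of $\bm{\gamma}_2,\ldots,\bm{\gamma}_p$. Indeed, the conditional covariance of $n^{-1/2}(\bm{I}_p-\bm{\gamma}_1\bm{\gamma}_1^\top)\bm{E}^\top\bm{v}$ is $n^{-1}\|\bm{v}\|^2\sum_{k\ge2}\lambda_k\bm{\gamma}_k\bm{\gamma}_k^\top$, which is basis-free.
\item With $\bm{\gamma}_1=\bm{\gamma}_1^{(n)}$, the statement must be read as a triangular-array approximation unless $\bm{\gamma}_1^{(n)}$ converges.
\end{itemize}
The paper's route is simply shorter once Theorem~\ref{thm1} is in hand.
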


\begin{rem}
The asymptotic covariance matrix in Theorem~\ref{thm2} is isotropic within the subspace orthogonal to $\bm{\gamma}_1$.
Its variance is governed by the ratio $\lambda_{\infty}/c_\infty$, which quantifies the relative strength of principal component variability to regression signal.
As $w \to 1$, the variance inflates due to the loss of the stabilizing effect of the regression component.
Under Assumption~\ref{ass2}, the asymptotically optimal choice is $w=0$.
A small $c_\infty$ also inflates the variance, reflecting the inherent difficulty of estimation when the regression information is weak.
\end{rem}

\section{Concluding remarks}\label{sec:8}
In this paper, we establish the asymptotic normality of the hybrid estimator of the leading principal eigenvector within the multivariate allometric regression model.
We further propose a consistent test of parallelism between the regression direction and the first principal eigenvector in a rank-one reduced-rank regression framework.
Numerical experiments demonstrate that the proposed test exhibits stable finite-sample performance, particularly in terms of size control under small gaps among the minor eigenvalues compared with the Wald type test and the method of \citet{RefBFN99}.
Our data analysis shows that the results obtained by the hybrid PCA are consistent with the statistical significance reported by \citet{RefBFN99} for the painted turtle carapace data. 

Several aspects remain to be further investigated.
First, since our proposed test statistic is formulated in terms of $\hat{\bm{\gamma}}_1(1)$, developing an alternative testing procedure that does not rely on $\hat{\bm{\gamma}}_1(1)$ is an important direction for future research, particularly in settings where identifiability of $\bm{\gamma}_1$ is weak.
Another challenging issue is to relax the normality assumption on $\bm{e}_1,\ldots,\bm{e}_n$ and extend our results to more general settings that allow for heavier-tailed distributions.
Additionally, incorporating heteroscedasticity into the model is an indispensable extension, as motivated by the structural variations observed in the turtle carapace data.
Finally, while the present paper focuses on the asymptotic normality of $\hat{\bm{\gamma}}_1(w), w \in [0,1]$, and hypothesis testing of parallelism, it would be of interest to develop a data-driven procedure for selecting $w$ based on the limiting MSE to improve the estimation of the leading principal eigenvector $\bm{\gamma}_1$.

\section*{Acknowledgements}
This work was supported by Japan Society for the Promotion of Science KAKENHI Grant Numbers 25K07133 (KT) and 26K14744 (SM).

\bibliographystyle{apalike}
\bibliography{ref-0}

\clearpage
\appendix
\section*{Appendix}
\section{Proofs}\label{sec:A1}

\subsection{Proofs for Section~\ref{sec:3}}

\begin{proof}[Proof of Theorem~\ref{thm1}]
Under Assumption~\ref{ass1}, $a$, $b$, and $d$ are constants, and $c = O(n)$.
Let $w \in [0,1]$.
It holds that $\lambda_1(w) \asymp n$ and that $\lambda_1(w) - \lambda_2(w) \asymp n$, which yield $\lambda_1(w) - \lambda_k(w) \asymp n$ for $k \in \{2,\ldots,p\}$.
From Lemma~\ref{lem1} and the fact that
\[ \E[ |\hat{\lambda}_1(w) - \lambda_1(w) |^2 ] \leq \E[\| \bm{S}(w) - \E[\bm{S}(w)]  \|_{\mathrm{F}}^2] = O(n), \]
it follows that
\[
\E[|\hat{\lambda}_1(w) - \lambda_1(w) |] \leq (\E[|\hat{\lambda}_1(w) - \lambda_1(w) |^2])^{1/2} = O(n^{1/2}), \quad
\E[ \| \hat{\bm{\gamma}}_1(w) - \bm{\gamma}_1 \|] = O(n^{-1/2}).
\]
These expectations yield 
\[
\bm{S}(w) - \E[\bm{S}(w)] = O_P(n^{1/2}), \quad
\hat{\lambda}_1(w) - \lambda_1(w) = O_P(n^{1/2}), \quad 
\hat{\bm{\gamma}}_1(w) - \bm{\gamma}_1 = O_P(n^{-1/2}).
\]

Since $\| \hat{\bm{\gamma}}_1(w) \|^2 = 1$, we have
$2 \bm{\gamma}_1^\top (\hat{\bm{\gamma}}_1(w) - \bm{\gamma}_1) + \| \hat{\bm{\gamma}}_1(w) - \bm{\gamma}_1 \|^2 = 0$, which implies
$\bm{\gamma}_1^\top (\hat{\bm{\gamma}}_1(w) - \bm{\gamma}_1) = O_P(n^{-1})$.
Let $a_k = \bm{\gamma}_k^\top (\hat{\bm{\gamma}}_1(w) - \bm{\gamma}_1)$ for $k \in \{1, \ldots, p\}$.
Then, it holds that
\[
\hat{\bm{\gamma}}_1(w) - \bm{\gamma}_1
= \sum_{k=1}^p a_k \bm{\gamma}_k
= \sum_{k=2}^p a_k \bm{\gamma}_k + O_P(n^{-1}).
\]
Using the characteristic equations
\[ \bm{S}(w) \hat{\bm{\gamma}}_1(w) = \hat{\lambda}_1 (w)\hat{\bm{\gamma}}_1(w), \quad 
\E[\bm{S}(w)] \bm{\gamma}_k = \lambda_k(w) \bm{\gamma}_k \quad (k \in \{ 1,\ldots,p\} ), \]
we have
\begin{align*}
& \{\E[\bm{S}(w)] + (\bm{S}(w) - \E[\bm{S}(w)])\}\{\bm{\gamma}_1 + (\hat{\bm{\gamma}}_1(w) - \bm{\gamma}_1) \} \\
& \qquad = \{  \lambda_1(w) + ( \hat{\lambda}_1 (w) - \lambda_1(w) )\} \{ \bm{\gamma}_1(w) + (\hat{\bm{\gamma}}_1(w) - \bm{\gamma}_1(w) )\}\\
&\Leftrightarrow
(\bm{S}(w) - \E[\bm{S}(w)])\bm{\gamma}_1 
+ \E[\bm{S}(w)] (\hat{\bm{\gamma}}_1(w) - \bm{\gamma}_1)
+ (\bm{S}(w) - \E[\bm{S}(w)])(\hat{\bm{\gamma}}_1(w) - \bm{\gamma}_1) \\
&\qquad
= ( \hat{\lambda}_1 (w) - \lambda_1(w) ) \bm{\gamma}_1
+ \lambda_1(w) (\hat{\bm{\gamma}}_1(w) - \bm{\gamma}_1 )
+ ( \hat{\lambda}_1 (w) - \lambda_1(w) )  (\hat{\bm{\gamma}}_1(w) - \bm{\gamma}_1 ) \\
&\Leftrightarrow 
\{(\bm{S}(w) - \E[\bm{S}(w)]) - ( \hat{\lambda}_1 (w) - \lambda_1(w) ) \bm{I}_p \} \bm{\gamma}_1 
= (\lambda_1(w) \bm{I}_p - \E[\bm{S}(w)]) (\hat{\bm{\gamma}}_1(w) - \bm{\gamma}_1 ) + \bm{r}_n,
\end{align*}
where the remainder term $\bm{r}_n = \{ ( \hat{\lambda}_1 (w) - \lambda_1(w) )\bm{I}_p - (\bm{S}(w) - \E[\bm{S}(w)])\} (\hat{\bm{\gamma}}_1(w) - \bm{\gamma}_1 )$ is $O_P(1)$.
For $k \in \{2,\ldots,p\}$, pre-multiplying the above equation by $\bm{\gamma}_k^\top$ yields
\begin{align*}
& \bm{\gamma}_k^\top (\bm{S}(w) - \E[\bm{S}(w)]) \bm{\gamma}_1 
= \bm{\gamma}_k^\top (\lambda_1(w) \bm{I}_p - \E[\bm{S}(w)]) \left( \sum_{j = 1}^p a_j \bm{\gamma}_j \right) + \bm{\gamma}_k^\top \bm{r}_n \\
&= (\lambda_1(w) - \lambda_k(w) ) \bm{\gamma}_k^\top \left( \sum_{j=1}^p a_j \bm{\gamma}_j \right) + \bm{\gamma}_k^\top \bm{r}_n 
= a_k (\lambda_1(w) - \lambda_k(w) ) + \bm{\gamma}_k^\top \bm{r}_n ,
\end{align*}
from which we have
\[
a_k = \frac{ \bm{\gamma}_k^\top (\bm{S}(w) - \E[\bm{S}(w)]) \bm{\gamma}_1  - \bm{\gamma}_k^\top \bm{r}_n }{\lambda_1(w) - \lambda_k(w) }
=  \frac{ \bm{\gamma}_k^\top (\bm{S}(w) - \E[\bm{S}(w)]) \bm{\gamma}_1  }{\lambda_1(w) - \lambda_k(w) } + O_P(n^{-1}).
\]
We thus have
\[
n^{1/2} (\hat{\bm{\gamma}}_1(w) - \bm{\gamma}_1)
= 
\sum_{k=2}^p \frac{ \bm{\gamma}_k^\top \{ n^{-1/2} (\bm{S}(w) - \E[\bm{S}(w)]) \} \bm{\gamma}_1 }{n^{-1} (\lambda_1(w) - \lambda_k(w))}  \bm{\gamma}_k + O_P(n^{-1/2}).
\]

Let $\tilde{\bm{\Lambda}} := \diag(\lambda_2,\ldots,\lambda_p)$ and
\[
\bm{g}_n 
:= (\bm{\gamma}_2^\top \{ n^{-1/2} (\bm{S}(w) - \E[\bm{S}(w)]) \} \bm{\gamma}_1, \ldots, \bm{\gamma}_p^\top \{ n^{-1/2} (\bm{S}(w) - \E[\bm{S}(w)]) \} \bm{\gamma}_1  )^\top .
\]
Using the Cram\'{e}r--Wold device, we shall show
\begin{equation}
    \bm{g}_n 
\Rightarrow
\mathrm{N}_{p-1} \left( \bm{0}, \{(1-w)^2 c_\infty + w^2 \lambda_1 \} \tilde{\bm{\Lambda}}  \right). \label{thp0}    
\end{equation}
Fix an arbitrary $\bm{t} = (t_2,\ldots,t_p)^\top \in \R^{p-1}$.
We examine the convergence in distribution of
\[
\bm{t}^\top \bm{g}_n =
\sum_{k=2}^p t_k \bm{\gamma}_k^\top \{ n^{-1/2} (\bm{S}(w) - \E[\bm{S}(w)]) \} \bm{\gamma}_1 .
\]
Since
\[ \bm{e}_1,\ldots,\bm{e}_n \stackrel{\mathrm{i.i.d.}}{\sim} \mathrm{N}_p (\bm{0}_p, \bm{\Sigma}),\]
the two terms on the right-hand side of
\[
n^{-1/2} (\bm{S}(w) - \E[\bm{S}(w)]) 
= (1-w) n^{1/2} (\bm{S}_R/n - \E[\bm{S}_R/n]) + w n^{1/2}(\bm{S}_E/n - \E[\bm{S}_E/n]) 
\]
are independent, so we consider the convergences in distribution of
\begin{align}
& (1-w) \sum_{k=2}^p t_k \bm{\gamma}_k^\top n^{1/2} (\bm{S}_R/n - \E[\bm{S}_R/n]) \bm{\gamma}_1 , \label{thp2} \\
&w \sum_{k=2}^p t_k \bm{\gamma}_k^\top n^{1/2}(\bm{S}_E/n - \E[\bm{S}_E/n])  \bm{\gamma}_1, \label{thp3}
\end{align}
separately.
Consider an i.i.d.~random sequence $\bm{z}_1,\bm{z}_2,\ldots \sim \mathrm{N}_p (\bm{0}_p , \bm{\Sigma})$.
Hereafter, $\stackrel{\mathrm{d}}{=}$ denotes the equality in distribution.
\begin{itemize}
\item 
Consider \eqref{thp2}.
From a stochastic representation
\[
\bm{S}_R
\stackrel{\mathrm{d}}{=} \|\bm{X}_n \bm{\alpha} \|^2 \bm{\gamma}_1 \bm{\gamma}_1^\top 
+  \|\bm{X}_n \bm{\alpha} \| ( \bm{\gamma}_1 \bm{z}_1^\top +   \bm{z}_1 \bm{\gamma}_1^\top ) 
+ \sum_{i=1}^q \bm{z}_i \bm{z}_i^\top ,
\]
it follows that
\[
n^{1/2} (\bm{S}_R/n - \E[\bm{S}_R/n]) 
\stackrel{\mathrm{d}}{=} n^{-1/2} \|\bm{X}_n \bm{\alpha} \| ( \bm{\gamma}_1 \bm{z}_1^\top +  \bm{z}_1 \bm{\gamma}_1^\top ) + n^{-1/2} \sum_{i=1}^q (\bm{z}_i \bm{z}_i^\top - \bm{\Sigma} ).
\]
Since $q$ is a constant, the second term on the right-hand side is $O_P(n^{-1/2})$.
For $k \neq 1$, we have
\[
\bm{\gamma}_k^\top 
\{ n^{-1/2} \|\bm{X}_n \bm{\alpha} \| ( \bm{\gamma}_1   \bm{z}_1^\top + \bm{z}_1 \bm{\gamma}_1^\top ) \} \bm{\gamma}_1
= n^{-1/2} \|\bm{X}_n \bm{\alpha} \| \bm{\gamma}_k^\top \bm{z}_1 . \]
Consequently, as $n\to\infty$, we obtain
\begin{align*}
& \sum_{k=2}^p t_k \bm{\gamma}_k^\top n^{1/2} (\bm{S}_R/n - \E[\bm{S}_R/n]) \bm{\gamma}_1 \\
& \stackrel{\mathrm{d}}{=} \sum_{k=2}^p t_k n^{-1/2} \|\bm{X}_n \bm{\alpha} \|  \bm{\gamma}_k^\top \bm{z}_1  + o_P(1)
\Rightarrow     
\mathrm{N} \left(0, c_\infty \sum_{k=2}^p t_k^2 \lambda_k \right).
\end{align*}
\item
Consider \eqref{thp3}.
From a stochastic representation
\[
\bm{S}_E
\stackrel{\mathrm{d}}{=} \sum_{i=1}^{n-q-1} \bm{z}_i \bm{z}_i^\top ,
\]
it follows that
\[
n^{1/2} (\bm{S}_E/n - \E[\bm{S}_E/n]) 
\stackrel{\mathrm{d}}{=} n^{-1/2} \sum_{i=1}^{n-q-1} (\bm{z}_i \bm{z}_i^\top - \bm{\Sigma} ) .
\]
Since
\[
\bm{\gamma}_k^\top
\left\{n^{-1/2} \sum_{i=1}^{n-q-1} (\bm{z}_i \bm{z}_i^\top - \bm{\Sigma} )  \right\} \bm{\gamma}_1
= n^{-1/2} \sum_{i=1}^{n-q-1}  \bm{\gamma}_k^\top\bm{z}_i \bm{z}_i^\top \bm{\gamma}_1 , \]
the central limit theorem yields
\begin{align*}
& \sum_{k=2}^p t_k \bm{\gamma}_k^\top n^{1/2} (\bm{S}_E/n - \E[\bm{S}_E/n]) \bm{\gamma}_1  \\
& \stackrel{\mathrm{d}}{=} \sum_{k=2}^p t_k n^{-1/2} \sum_{i=1}^{n-q-1}  \bm{\gamma}_k^\top\bm{z}_i \bm{z}_i^\top \bm{\gamma}_1
\Rightarrow     
\mathrm{N} \left(0, \lambda_1 \sum_{k=2}^p t_k^2 \lambda_k  \right).
\end{align*}
\end{itemize}
The independence of \eqref{thp2} and \eqref{thp3} implies that
\[
\bm{t}^\top \bm{g}_n =
\sum_{k=2}^p t_k \bm{\gamma}_k^\top \{ n^{-1/2} (\bm{S}(w) - \E[\bm{S}(w)]) \} \bm{\gamma}_1 
\Rightarrow
\mathrm{N} \left(0, \{(1-w)^2  c_\infty + w^2 \lambda_1 \} \bm{t}^\top \tilde{\bm{\Lambda}} \bm{t}  \right).
\]
Hence, the Cram\'{e}r--Wold device yields \eqref{thp0}.
In particular, the limiting distribution is a $(p-1)$-dimensional normal distribution with a diagonal covariance matrix.
Consequently, it follows from
\[
n^{-1} (\lambda_1(w) - \lambda_k(w)) \to w (\lambda_1 - \lambda_k) + (1-w) c_\infty
\]
that
\begin{align*}
& n^{1/2} (\hat{\bm{\gamma}}_1(w) - \bm{\gamma}_1)
= 
\sum_{k=2}^p \frac{ \bm{\gamma}_k^\top \{ n^{-1/2} (\bm{S}(w) - \E[\bm{S}(w)]) \} \bm{\gamma}_1 }{n^{-1} (\lambda_1(w) - \lambda_k(w))}  \bm{\gamma}_k + O_P(n^{-1/2}) \\
&\Rightarrow
\mathrm{N}_p \left( \bm{0}_p, \{w^2 \lambda_1 + (1-w)^2 c_\infty  \} \sum_{k=2}^p \frac{  \lambda_k}{\{ w (\lambda_1 - \lambda_k) + (1-w) c_\infty \}^2} \bm{\gamma}_k \bm{\gamma}_k^\top \right).
\end{align*}
This completes the proof.
\end{proof}

\begin{proof}[Proof of Corollary~\ref{cor0}]
By the Davis--Kahn theorem and moment bounds for $\| \bm{S}(w) - \E[\bm{S}(w)] \|_{\mathrm{F}}$ under normality, the sequence $\{ \| n^{1/2} (\hat{\bm{\gamma}}_1(w) - \bm{\gamma}_1) \|^2 \}_{n \in \N}$ is uniformly integrable.
Therefore, Theorem~\ref{thm1} yields the conclusion.
\end{proof}

\begin{proof}[Proof of Corollary~\ref{cor1}]
Since $\bm{S}_R$ and $\bm{S}_E$ are independent, so are $\hat{\bm{\gamma}}_1(0)$ and $\hat{\bm{\gamma}}_1(1)$.
The conclusion then follows from Theorem~\ref{thm1}.
\end{proof}

\begin{proof}[Proof of Corollary~\ref{cor2}]
From Corollary~\ref{cor1}, the continuous mapping theorem yields
\[
2n(1 - \hat{\bm{\gamma}}_1(1)^\top \hat{\bm{\gamma}}_1(0))
= n \|\hat{\bm{\gamma}}_1(1) - \hat{\bm{\gamma}}_1(0) \|^2 
\Rightarrow \sum_{k=2}^p \left\{ \frac{  \lambda_1}{ (\lambda_1 - \lambda_k)^2}  + \frac{1}{c_\infty} \right\} \lambda_k  \chi^2_{1,k}.
\]
By the Slutsky theorem,
\begin{align*}
& n\{ 1 - (\hat{\bm{\gamma}}_1(1)^\top \hat{\bm{\gamma}}_1(0))^2 \} - 2n(1 - \hat{\bm{\gamma}}_1(1)^\top \hat{\bm{\gamma}}_1(0)) \\
&= (1 + \hat{\bm{\gamma}}_1(1)^\top \hat{\bm{\gamma}}_1(0)) n(1 - \hat{\bm{\gamma}}_1(1)^\top \hat{\bm{\gamma}}_1(0)) - 2n(1 - \hat{\bm{\gamma}}_1(1)^\top \hat{\bm{\gamma}}_1(0)) \\
&= (\hat{\bm{\gamma}}_1(1)^\top \hat{\bm{\gamma}}_1(0) - 1) n(1 - \hat{\bm{\gamma}}_1(1)^\top \hat{\bm{\gamma}}_1(0))
= O_P(n^{-1}) ,
\end{align*}
which yields the conclusion.
\end{proof}

\subsection{Proofs for Section~\ref{sec:4}}

\begin{proof}[Proof of Lemma~\ref{lem2}]
It follows from $\hat{\bm{\Sigma}} \stackrel{\mathrm{p}}{\to} \bm{\Sigma}$ that
\[ \max_{k \in \{1,\ldots,p \}} | \hat{\lambda}_k - \lambda_k | \leq \| \hat{\bm{\Sigma}} - \bm{\Sigma} \|_{\mathrm{F}} \stackrel{\mathrm{p}}{\to} 0 . \]
By the same argument as in Proof of Theorem~\ref{thm1}, 
\[ n^{-1} \bm{S}_R - \E[n^{-1} \bm{S}_R]  = O_P(n^{-1/2}). \]
Since 
\[ \E[n^{-1} \tr(\bm{S}_R)] = n^{-1} c + n^{-1} q \tr(\bm{\Sigma}) \to c_{\infty}, \]
it holds that
\[ n^{-1} \tr(\bm{S}_R) \stackrel{\mathrm{p}}{\to} c_{\infty}. \]
Combined with 
\[ n^{-1} q \tr(\hat{\bm{\Sigma}}) \stackrel{\mathrm{p}}{\to} 0,\]
we have
\[ \hat{c}_\infty = n^{-1} (\tr(\bm{S}_R)  - q \tr(\hat{\bm{\Sigma}})) \stackrel{\mathrm{p}}{\to} c_\infty.\]
The continuous mapping theorem then yields
\[
\hat{V}_k \stackrel{\mathrm{p}}{\to}
\left\{ \frac{  \lambda_1}{ (\lambda_1 - \lambda_k)^2}  + \frac{1}{c_\infty} \right\} \lambda_k 
\]
for $k \in \{2,\ldots,p\}$.
Consequently,
\begin{align*}
& \left| \sum_{k=2}^p \left[ \hat{V}_k -
\left\{ \frac{  \lambda_1}{ (\lambda_1 - \lambda_k)^2}  + \frac{1}{c_\infty} \right\} \lambda_k \right] \chi^2_{1,k} \right| \\
& \leq 
\max_{k\in\{2,\ldots,p\}} \left| \hat{V}_k -
\left\{ \frac{  \lambda_1}{ (\lambda_1 - \lambda_k)^2}  + \frac{1}{c_\infty} \right\} \lambda_k \right| \sum_{k=2}^p  \chi^2_{1,k} 
\stackrel{\mathrm{p}}{\to} 0.
\end{align*}
This completes the proof.
\end{proof}

\begin{proof}[Proof of Theorem~\ref{thm3}]
Using Lemma~\ref{lem2} and the fact that
\[
\sum_{k=2}^p \left\{ \frac{  \lambda_1}{ (\lambda_1 - \lambda_k)^2}  + \frac{1}{c_\infty} \right\} \lambda_k  \chi^2_{1,k}
\]
has a continuous distribution, the continuity of the quantile mapping yields $\hat{q}(\alpha_0) \stackrel{\mathrm{p}}{\to} q(\alpha_0)$.
\begin{itemize}
\item[(i)]
From Corollary~\ref{cor2}, the assertion follows.
\item[(ii)]
Theorem~\ref{thm1} yields $n^{1/2} ({\bm{\gamma}}_1 - \hat{\bm{\gamma}}_1(1)) = O_P(1)$.
Moreover, $\hat{\bm{\gamma}}_1(1)=O_P(1)$.
By a similar argument to the proof of Lemma~\ref{lem1}, we have $n^{1/2} ( \hat{\bm{\gamma}}_1(0) - {\bm{\beta}}) = O_P(1)$.
Consequently,
\begin{align*}
T_n 
&= n \|\hat{\bm{\gamma}}_1(1) - \hat{\bm{\gamma}}_1(0) \|^2 
= 2 n (1 - \hat{\bm{\gamma}}_1(1)^\top \hat{\bm{\gamma}}_1(0)) \\
&= 2 n (1 - {\bm{\gamma}}_1^\top {\bm{\beta}}) + 2 n ({\bm{\gamma}}_1^\top {\bm{\beta}} - \hat{\bm{\gamma}}_1(1)^\top \hat{\bm{\gamma}}_1(0)) \\
&= 2 n (1 - {\bm{\gamma}}_1^\top {\bm{\beta}}) 
+ 2 n ({\bm{\gamma}}_1 - \hat{\bm{\gamma}}_1(1))^\top  {\bm{\beta}} 
+ 2 \hat{\bm{\gamma}}_1(1)^\top n( {\bm{\beta}} -  \hat{\bm{\gamma}}_1(0)) \\
&= 2 n (1 - {\bm{\gamma}}_1^\top {\bm{\beta}}) + O_P(n^{1/2}).
\end{align*}
Since $|\bm{\beta}^\top \bm{\gamma}_1| < 1$, $1 - {\bm{\gamma}}_1^\top {\bm{\beta}}$ is a positive constant.
\end{itemize}
This completes the proof.
\end{proof}

\section{Detailed numerical results}\label{sec:A2}

Empirical type I and type II error rates for the binary and continuous explanatory variable cases are reported in Tables~\ref{tab:size1}--\ref{tab:size2} and Tables~\ref{tab:power1}--\ref{tab:power2}, respectively.
In particular, Tables~\ref{tab:size1}--\ref{tab:size2} provide detailed numerical results corresponding to the figures in the main text.

\begin{table}[htbp]
\centering
\small
\setlength{\tabcolsep}{2.8pt}
\renewcommand{\arraystretch}{1.1}
\caption{Empirical type I error rates for the binary cases with eigenvalues $(\lambda_1,\ldots,\lambda_5)$ and $c/n = n^{-1} \| \bm{X}_n \bm{\alpha} \|^2$ for $n \in \{20, 50, 100, 200, 500\}$.
Bold entries indicate rates not exceeding the significance level $\alpha_0 = 0.05$.}
\label{tab:size1}

\begin{tabular}{llc ccccc}
\toprule
$(\lambda_1,\ldots,\lambda_5)$ & $c/n$ & Method & 20 & 50 & 100 & 200 & 500 \\
\midrule

$(10,8,6,4,2)$ & 2 
& $\varphi_n$ & 0.062 & 0.076 & 0.092 & 0.072 & 0.053 \\
& & $\tilde{\varphi}_n$ & 0.114 & 0.061 & 0.068 & 0.056 & 0.067 \\
& & $\varphi_{\mathrm{BFN}}$ & 0.641 & 0.500 & 0.435 & 0.315 & 0.168 \\
\cline{2-8}
& 10
& $\varphi_n$ & 0.177 & 0.163 & 0.127 & 0.088 & 0.060 \\
& & $\tilde{\varphi}_n$ & 0.332 & 0.176 & 0.114 & 0.090 & 0.087 \\
& & $\varphi_{\mathrm{BFN}}$ & 0.838 & 0.707 & 0.533 & 0.404 & 0.230 \\
\midrule

$(10,1,1,1,1)$ & 2
& $\varphi_n$ & \textbf{0.021} & \textbf{0.026} & \textbf{0.048} & \textbf{0.046} & \textbf{0.045} \\
& & $\tilde{\varphi}_n$ & 0.083 & \textbf{0.040} & 0.058 & \textbf{0.050} & \textbf{0.047} \\
& & $\varphi_{\mathrm{BFN}}$ & 0.372 & 0.097 & 0.086 & \textbf{0.050} & 0.052 \\
\cline{2-8}
& 10
& $\varphi_n$ & 0.067 & 0.051 & 0.055 & 0.056 & \textbf{0.049} \\
& & $\tilde{\varphi}_n$ & 0.168 & 0.109 & 0.067 & 0.061 & \textbf{0.049} \\
& & $\varphi_{\mathrm{BFN}}$ & 0.431 & 0.152 & 0.092 & 0.068 & \textbf{0.047} \\
\midrule

$(2,1.5,1.5,1.5,1)$ & 2
& $\varphi_n$ & 0.147 & 0.134 & 0.101 & 0.077 & \textbf{0.043} \\
& & $\tilde{\varphi}_n$ & 0.357 & 0.242 & 0.173 & 0.112 & 0.074 \\
& & $\varphi_{\mathrm{BFN}}$ & 0.904 & 0.843 & 0.746 & 0.563 & 0.286 \\
\cline{2-8}
& 10
& $\varphi_n$ & 0.214 & 0.161 & 0.120 & \textbf{0.050} & \textbf{0.048} \\
& & $\tilde{\varphi}_n$ & 0.555 & 0.340 & 0.233 & 0.153 & 0.087 \\
& & $\varphi_{\mathrm{BFN}}$ & 0.968 & 0.911 & 0.790 & 0.615 & 0.305 \\
\midrule

$(1.8,1,1,1,1)$ & 2
& $\varphi_n$ & 0.105 & 0.059 & 0.054 & \textbf{0.047} & \textbf{0.046} \\
& & $\tilde{\varphi}_n$ & 0.316 & 0.141 & 0.103 & 0.071 & 0.053 \\
& & $\varphi_{\mathrm{BFN}}$ & 0.853 & 0.657 & 0.438 & 0.214 & 0.115 \\
\cline{2-8}
& 10
& $\varphi_n$ & 0.143 & 0.066 & 0.061 & 0.060 & \textbf{0.049} \\
& & $\tilde{\varphi}_n$ & 0.422 & 0.201 & 0.118 & 0.106 & 0.069 \\
& & $\varphi_{\mathrm{BFN}}$ & 0.927 & 0.752 & 0.467 & 0.277 & 0.116 \\
\midrule

$(1.2,1,1,1,1)$ & 2
& $\varphi_n$ & 0.195 & 0.200 & 0.154 & 0.113 & 0.090 \\
& & $\tilde{\varphi}_n$ & 0.510 & 0.444 & 0.345 & 0.236 & 0.137 \\
& & $\varphi_{\mathrm{BFN}}$ & 0.960 & 0.939 & 0.933 & 0.870 & 0.692 \\
\cline{2-8}
& 10
& $\varphi_n$ & 0.254 & 0.212 & 0.168 & 0.141 & 0.091 \\
& & $\tilde{\varphi}_n$ & 0.644 & 0.487 & 0.396 & 0.304 & 0.178 \\
& & $\varphi_{\mathrm{BFN}}$ & 0.992 & 0.964 & 0.944 & 0.867 & 0.714 \\
\bottomrule
\end{tabular}
\end{table}

\begin{table}[htbp]
\centering
\small
\setlength{\tabcolsep}{2.8pt}
\renewcommand{\arraystretch}{1.1}
\caption{Empirical type I error rates for the continuous cases with eigenvalues $(\lambda_1,\ldots,\lambda_5)$ and $c/n = n^{-1} \| \bm{X}_n \bm{\alpha} \|^2$ for $n \in \{20, 50, 100, 200, 500\}$.
Bold entries indicate rates not exceeding the significance level $\alpha_0 = 0.05$.}
\label{tab:size2}

\begin{tabular}{llc ccccc}
\toprule
$(\lambda_1,\ldots,\lambda_5)$ & $c/n$ & Method & 20 & 50 & 100 & 200 & 500 \\
\midrule

$(10,8,6,4,2)$ & 2
& $\varphi_n$ & 0.060 & 0.075 & 0.111 & 0.064 & 0.057 \\
& & $\tilde{\varphi}_n$ & 0.117 & 0.052 & 0.063 & \textbf{0.049} & 0.052 \\
\cmidrule(l){2-8}
& 10
& $\varphi_n$ & 0.167 & 0.166 & 0.127 & 0.089 & 0.073 \\
& & $\tilde{\varphi}_n$ & 0.297 & 0.191 & 0.119 & 0.090 & 0.079 \\
\midrule

$(10,1,1,1,1)$ & 2
& $\varphi_n$ & \textbf{0.023} & \textbf{0.024} & \textbf{0.032} & \textbf{0.042} & \textbf{0.045} \\
& & $\tilde{\varphi}_n$ & 0.095 & \textbf{0.036} & \textbf{0.046} & 0.052 & \textbf{0.047} \\
\cmidrule(l){2-8}
& 10
& $\varphi_n$ & 0.057 & \textbf{0.045} & 0.055 & 0.059 & 0.052 \\
& & $\tilde{\varphi}_n$ & 0.162 & 0.089 & 0.069 & 0.061 & 0.052 \\
\midrule

$(2,1.5,1.5,1.5,1)$ & 2
& $\varphi_n$ & 0.139 & 0.142 & 0.111 & 0.073 & \textbf{0.044} \\
& & $\tilde{\varphi}_n$ & 0.332 & 0.249 & 0.172 & 0.106 & 0.063 \\
\cmidrule(l){2-8}
& 10
& $\varphi_n$ & 0.202 & 0.143 & 0.115 & 0.057 & \textbf{0.044} \\
& & $\tilde{\varphi}_n$ & 0.549 & 0.341 & 0.229 & 0.147 & 0.086 \\
\midrule

$(1.8,1,1,1,1)$ & 2
& $\varphi_n$ & 0.093 & 0.065 & \textbf{0.041} & \textbf{0.047} & 0.052 \\
& & $\tilde{\varphi}_n$ & 0.291 & 0.139 & 0.105 & 0.085 & 0.059 \\
\cmidrule(l){2-8}
& 10
& $\varphi_n$ & 0.143 & 0.073 & 0.060 & 0.060 & \textbf{0.046} \\
& & $\tilde{\varphi}_n$ & 0.430 & 0.194 & 0.116 & 0.097 & 0.069 \\
\midrule

$(1.2,1,1,1,1)$ & 2
& $\varphi_n$ & 0.188 & 0.191 & 0.151 & 0.110 & 0.091 \\
& & $\tilde{\varphi}_n$ & 0.531 & 0.428 & 0.323 & 0.259 & 0.149 \\
\cmidrule(l){2-8}
& 10
& $\varphi_n$ & 0.253 & 0.200 & 0.164 & 0.131 & 0.091 \\
& & $\tilde{\varphi}_n$ & 0.656 & 0.482 & 0.387 & 0.296 & 0.176 \\
\bottomrule
\end{tabular}
\end{table}

\begin{table}[htbp]
\centering
\small
\setlength{\tabcolsep}{2.8pt}
\renewcommand{\arraystretch}{1.1}
\caption{Empirical rejection rates under $\Hc_1$ for the binary cases with $\bm{\beta} = (\cos \theta, \sin \theta, 0, \ldots, 0)^\top$, eigenvalues $(\lambda_1,\ldots,\lambda_5)$, and $c/n = n^{-1} \| \bm{X}_n \bm{\alpha} \|^2$ for $n \in \{20, 50, 100, 200, 500\}$, with significance level $\alpha_0 = 0.05$.}
\label{tab:power1}
\begin{tabular}{llc ccccc @{\hspace{8pt}} ccccc}
\toprule
& & 
& \multicolumn{5}{c}{$\theta = \pi/4$}
& \multicolumn{5}{c}{$\theta = \pi/2$} \\
\midrule
$(\lambda_1,\ldots,\lambda_5)$ & $c/n$ & Method
& 20 & 50 & 100 & 200 & 500
& 20 & 50 & 100 & 200 & 500 \\
\midrule

$(10,8,6,4,2)$ & 2
& $\varphi_n$ & 0.079 & 0.218 & 0.364 & 0.567 & 0.850 & 0.118 & 0.363 & 0.619 & 0.826 & 0.980 \\
& & $\tilde{\varphi}_n$ & 0.105 & 0.118 & 0.200 & 0.380 & 0.711 & 0.121 & 0.194 & 0.405 & 0.672 & 0.918 \\
& & $\varphi_{\mathrm{BFN}}$ & 0.652 & 0.561 & 0.637 & 0.722 & 0.858 & 0.686 & 0.592 & 0.728 & 0.874 & 0.988 \\
\cline{2-13}
& 10
& $\varphi_n$ & 0.269 & 0.370 & 0.481 & 0.680 & 0.870 & 0.362 & 0.565 & 0.688 & 0.859 & 0.984 \\
& & $\tilde{\varphi}_n$ & 0.404 & 0.375 & 0.433 & 0.552 & 0.770 & 0.489 & 0.530 & 0.580 & 0.738 & 0.933 \\
& & $\varphi_{\mathrm{BFN}}$ & 0.889 & 0.871 & 0.854 & 0.888 & 0.942 & 0.931 & 0.944 & 0.970 & 0.990 & 0.999 \\
\midrule

$(10,1,1,1,1)$ & 2
& $\varphi_n$ & 0.730 & 1.000 & 1.000 & 1.000 & 1.000 & 0.974 & 1.000 & 1.000 & 1.000 & 1.000 \\
& & $\tilde{\varphi}_n$ & 0.767 & 1.000 & 1.000 & 1.000 & 1.000 & 0.970 & 1.000 & 1.000 & 1.000 & 1.000 \\
& & $\varphi_{\mathrm{BFN}}$ & 0.915 & 1.000 & 1.000 & 1.000 & 1.000 & 0.994 & 1.000 & 1.000 & 1.000 & 1.000 \\
\cline{2-13}
& 10
& $\varphi_n$ & 0.996 & 1.000 & 1.000 & 1.000 & 1.000 & 1.000 & 1.000 & 1.000 & 1.000 & 1.000 \\
& & $\tilde{\varphi}_n$ & 1.000 & 1.000 & 1.000 & 1.000 & 1.000 & 1.000 & 1.000 & 1.000 & 1.000 & 1.000 \\
& & $\varphi_{\mathrm{BFN}}$ & 0.999 & 1.000 & 1.000 & 1.000 & 1.000 & 1.000 & 1.000 & 1.000 & 1.000 & 1.000 \\
\midrule

$(2,1.5,1.5,1.5,1)$ & 2
& $\varphi_n$ & 0.213 & 0.278 & 0.402 & 0.557 & 0.890 & 0.289 & 0.463 & 0.628 & 0.818 & 0.982 \\
& & $\tilde{\varphi}_n$ & 0.487 & 0.467 & 0.588 & 0.732 & 0.929 & 0.583 & 0.702 & 0.803 & 0.911 & 0.991 \\
& & $\varphi_{\mathrm{BFN}}$ & 0.940 & 0.941 & 0.946 & 0.956 & 0.984 & 0.963 & 0.981 & 0.995 & 0.998 & 1.000 \\
\cline{2-13}
& 10
& $\varphi_n$ & 0.286 & 0.312 & 0.420 & 0.566 & 0.901 & 0.378 & 0.486 & 0.652 & 0.806 & 0.984 \\
& & $\tilde{\varphi}_n$ & 0.675 & 0.588 & 0.650 & 0.733 & 0.933 & 0.777 & 0.761 & 0.841 & 0.917 & 0.992 \\
& & $\varphi_{\mathrm{BFN}}$ & 0.990 & 0.974 & 0.976 & 0.966 & 0.993 & 0.993 & 0.997 & 0.995 & 0.999 & 1.000 \\
\midrule

$(1.8,1,1,1,1)$ & 2
& $\varphi_n$ & 0.266 & 0.490 & 0.745 & 0.968 & 1.000 & 0.471 & 0.761 & 0.955 & 0.997 & 1.000 \\
& & $\tilde{\varphi}_n$ & 0.593 & 0.735 & 0.882 & 0.983 & 1.000 & 0.775 & 0.915 & 0.989 & 0.999 & 1.000 \\
& & $\varphi_{\mathrm{BFN}}$ & 0.959 & 0.955 & 0.983 & 0.999 & 1.000 & 0.991 & 1.000 & 1.000 & 1.000 & 1.000 \\
\cline{2-13}
& 10
& $\varphi_n$ & 0.354 & 0.530 & 0.776 & 0.982 & 1.000 & 0.507 & 0.762 & 0.950 & 0.999 & 1.000 \\
& & $\tilde{\varphi}_n$ & 0.749 & 0.790 & 0.916 & 0.994 & 1.000 & 0.880 & 0.943 & 0.978 & 0.999 & 1.000 \\
& & $\varphi_{\mathrm{BFN}}$ & 0.991 & 0.987 & 0.991 & 0.999 & 1.000 & 0.999 & 0.999 & 1.000 & 1.000 & 1.000 \\
\midrule

$(1.2,1,1,1,1)$ & 2
& $\varphi_n$ & 0.256 & 0.257 & 0.300 & 0.337 & 0.469 & 0.294 & 0.367 & 0.414 & 0.541 & 0.758 \\
& & $\tilde{\varphi}_n$ & 0.637 & 0.591 & 0.591 & 0.654 & 0.764 & 0.671 & 0.717 & 0.764 & 0.833 & 0.924 \\
& & $\varphi_{\mathrm{BFN}}$ & 0.982 & 0.980 & 0.979 & 0.977 & 0.982 & 0.992 & 0.996 & 0.996 & 0.996 & 0.999 \\
\cline{2-13}
& 10
& $\varphi_n$ & 0.291 & 0.286 & 0.278 & 0.338 & 0.510 & 0.351 & 0.361 & 0.446 & 0.543 & 0.755 \\
& & $\tilde{\varphi}_n$ & 0.731 & 0.647 & 0.635 & 0.639 & 0.777 & 0.813 & 0.758 & 0.787 & 0.851 & 0.937 \\
& & $\varphi_{\mathrm{BFN}}$ & 0.995 & 0.989 & 0.976 & 0.986 & 0.988 & 0.998 & 0.998 & 0.997 & 1.000 & 1.000 \\
\bottomrule
\end{tabular}
\end{table}

\begin{table}[htbp]
\centering
\small
\setlength{\tabcolsep}{2.8pt}
\renewcommand{\arraystretch}{1.1}
\caption{Empirical rejection rates under $\Hc_1$ for the continuous cases with $\bm{\beta} = (\cos \theta, \sin \theta, 0, \ldots, 0)^\top$, eigenvalues $(\lambda_1,\ldots,\lambda_5)$, and $c/n = n^{-1} \| \bm{X}_n \bm{\alpha} \|^2$ for $n \in \{20, 50, 100, 200, 500\}$, with significance level $\alpha_0 = 0.05$.}
\label{tab:power2}

\begin{tabular}{llc ccccc @{\hspace{8pt}} ccccc}
\toprule
& &
& \multicolumn{5}{c}{$\theta=\pi/4$}
& \multicolumn{5}{c}{$\theta=\pi/2$} \\
\midrule
$(\lambda_1,\ldots,\lambda_5)$ & $c/n$ & Method
& 20 & 50 & 100 & 200 & 500
& 20 & 50 & 100 & 200 & 500 \\
\midrule

$(10,8,6,4,2)$ & 2
& $\varphi_n$ & 0.083 & 0.233 & 0.377 & 0.521 & 0.865 & 0.137 & 0.356 & 0.628 & 0.831 & 0.975 \\
& & $\tilde{\varphi}_n$ & 0.122 & 0.141 & 0.208 & 0.360 & 0.706 & 0.130 & 0.176 & 0.392 & 0.631 & 0.905 \\
\cmidrule(l){2-13}
& 10
& $\varphi_n$ & 0.274 & 0.351 & 0.497 & 0.681 & 0.870 & 0.375 & 0.562 & 0.687 & 0.857 & 0.982 \\
& & $\tilde{\varphi}_n$ & 0.415 & 0.363 & 0.426 & 0.540 & 0.764 & 0.485 & 0.529 & 0.599 & 0.733 & 0.940 \\
\midrule

$(10,1,1,1,1)$ & 2
& $\varphi_n$ & 0.745 & 1.000 & 1.000 & 1.000 & 1.000 & 0.979 & 1.000 & 1.000 & 1.000 & 1.000 \\
& & $\tilde{\varphi}_n$ & 0.773 & 0.999 & 1.000 & 1.000 & 1.000 & 0.975 & 1.000 & 1.000 & 1.000 & 1.000 \\
\cmidrule(l){2-13}
& 10
& $\varphi_n$ & 0.996 & 1.000 & 1.000 & 1.000 & 1.000 & 1.000 & 1.000 & 1.000 & 1.000 & 1.000 \\
& & $\tilde{\varphi}_n$ & 1.000 & 1.000 & 1.000 & 1.000 & 1.000 & 1.000 & 1.000 & 1.000 & 1.000 & 1.000 \\
\midrule

$(2,1.5,1.5,1.5,1)$ & 2
& $\varphi_n$ & 0.211 & 0.266 & 0.396 & 0.570 & 0.898 & 0.285 & 0.474 & 0.623 & 0.823 & 0.981 \\
& & $\tilde{\varphi}_n$ & 0.485 & 0.483 & 0.583 & 0.724 & 0.936 & 0.565 & 0.706 & 0.819 & 0.909 & 0.991 \\
\cmidrule(l){2-13}
& 10
& $\varphi_n$ & 0.280 & 0.303 & 0.415 & 0.551 & 0.899 & 0.376 & 0.483 & 0.642 & 0.807 & 0.981 \\
& & $\tilde{\varphi}_n$ & 0.665 & 0.594 & 0.650 & 0.731 & 0.939 & 0.759 & 0.768 & 0.836 & 0.923 & 0.991 \\
\midrule

$(1.8,1,1,1,1)$ & 2
& $\varphi_n$ & 0.244 & 0.448 & 0.739 & 0.969 & 1.000 & 0.472 & 0.760 & 0.957 & 0.996 & 1.000 \\
& & $\tilde{\varphi}_n$ & 0.579 & 0.713 & 0.889 & 0.984 & 1.000 & 0.769 & 0.913 & 0.992 & 0.998 & 1.000 \\
\cmidrule(l){2-13}
& 10
& $\varphi_n$ & 0.343 & 0.517 & 0.772 & 0.981 & 1.000 & 0.497 & 0.764 & 0.948 & 0.997 & 1.000 \\
& & $\tilde{\varphi}_n$ & 0.752 & 0.796 & 0.918 & 0.995 & 1.000 & 0.868 & 0.933 & 0.975 & 0.999 & 1.000 \\
\midrule

$(1.2,1,1,1,1)$ & 2
& $\varphi_n$ & 0.243 & 0.270 & 0.300 & 0.337 & 0.472 & 0.316 & 0.356 & 0.422 & 0.542 & 0.753 \\
& & $\tilde{\varphi}_n$ & 0.625 & 0.599 & 0.615 & 0.657 & 0.775 & 0.677 & 0.724 & 0.768 & 0.831 & 0.932 \\
\cmidrule(l){2-13}
& 10
& $\varphi_n$ & 0.293 & 0.273 & 0.274 & 0.331 & 0.505 & 0.361 & 0.352 & 0.442 & 0.537 & 0.755 \\
& & $\tilde{\varphi}_n$ & 0.716 & 0.656 & 0.633 & 0.653 & 0.790 & 0.807 & 0.742 & 0.783 & 0.850 & 0.936 \\
\bottomrule
\end{tabular}
\end{table}

\clearpage
\section{Additional numerical experiments}\label{sec:A3}

To assess robustness under heteroscedasticity observed in Section~\ref{sec:6}, we conduct additional simulations focusing on type I error rates.

The simulation settings are the same as in Section~\ref{sec:5} for the binary explanatory variable case, except for the covariance matrices.
The covariance matrix $\bm{\Sigma}_i$ of the $i$-th specimen depends on $i \in \{1, \ldots, n\}$.
Specifically, we set $\bm{\Sigma}_i = \diag(\bm{\lambda})$ for $i \in \{1,\ldots,n/2\}$ and $\bm{\Sigma}_i = 2 \diag(\bm{\lambda})$ for $i \in \{n/2 + 1,\ldots,n\}$ with
\begin{align*} 
\bm{\lambda}^\top 
\in \{ (10,8,6,4,2),
(10,1,1,1,1), 
(2,1.5,1.5,1.5,1), 
(1.8,1,1,1,1), 
(1.2,1,1,1,1) \}.
\end{align*}
Under this setting, we compare $\varphi_n$, $\tilde{\varphi}_n$, and $\varphi_{\text{BFN}}$.

Empirical type I error rates are reported in Table~\ref{tab:size3}. 
The results are similar to those under homoscedasticity. 
Specifically, the size of $\varphi_n$ remains more conservative under mild heteroscedasticity than $\tilde{\varphi}_n$ and $\varphi_{\text{BFN}}$.
Similar tendencies are also observed under the same alternative hypotheses as in Section~\ref{sec:5}.

\begin{table}[htbp]
\centering
\small
\setlength{\tabcolsep}{2.8pt}
\renewcommand{\arraystretch}{1.1}
\caption{Empirical type I error rates under heteroscedasticity with eigenvalues $(\lambda_1,\ldots,\lambda_5)$ and $c/n = n^{-1} \| \bm{X}_n \bm{\alpha} \|^2$ for $n \in \{20, 50, 100, 200, 500\}$.
Bold entries indicate rates not exceeding the significance level $\alpha_0 = 0.05$.}
\label{tab:size3}
\begin{tabular}{llc ccccc}
\toprule
$(\lambda_1,\ldots,\lambda_5)$ & $c/n$ & Method & 20 & 50 & 100 & 200 & 500 \\
\midrule

$(10,8,6,4,2)$ & 2
& $\varphi_n$ & 0.051 & 0.066 & 0.085 & 0.091 & 0.051 \\
& & $\tilde{\varphi}_n$ & 0.092 & \textbf{0.045} & 0.052 & 0.052 & \textbf{0.045} \\
& & $\varphi_{\mathrm{BFN}}$ & 0.607 & 0.425 & 0.379 & 0.281 & 0.151 \\
\cline{2-8}
& 10
& $\varphi_n$ & 0.194 & 0.181 & 0.130 & 0.116 & 0.075 \\
& & $\tilde{\varphi}_n$ & 0.307 & 0.201 & 0.135 & 0.114 & 0.095 \\
& & $\varphi_{\mathrm{BFN}}$ & 0.808 & 0.691 & 0.496 & 0.391 & 0.198 \\
\midrule

$(10,1,1,1,1)$ & 2
& $\varphi_n$ & \textbf{0.028} & \textbf{0.028} & \textbf{0.042} & \textbf{0.037} & \textbf{0.043} \\
& & $\tilde{\varphi}_n$ & 0.070 & \textbf{0.047} & \textbf{0.047} & \textbf{0.041} & \textbf{0.046} \\
& & $\varphi_{\mathrm{BFN}}$ & 0.318 & 0.132 & 0.086 & 0.068 & \textbf{0.046} \\
\cline{2-8}
& 10
& $\varphi_n$ & 0.066 & \textbf{0.046} & 0.058 & 0.052 & 0.057 \\
& & $\tilde{\varphi}_n$ & 0.170 & 0.090 & 0.073 & 0.068 & 0.059 \\
& & $\varphi_{\mathrm{BFN}}$ & 0.402 & 0.137 & 0.081 & 0.075 & 0.056 \\
\midrule

$(2,1.5,1.5,1.5,1)$ & 2
& $\varphi_n$ & 0.136 & 0.155 & 0.126 & 0.103 & 0.060 \\
& & $\tilde{\varphi}_n$ & 0.308 & 0.236 & 0.198 & 0.149 & 0.098 \\
& & $\varphi_{\mathrm{BFN}}$ & 0.886 & 0.793 & 0.720 & 0.555 & 0.294 \\
\cline{2-8}
& 10
& $\varphi_n$ & 0.253 & 0.183 & 0.144 & 0.102 & 0.086 \\
& & $\tilde{\varphi}_n$ & 0.570 & 0.366 & 0.267 & 0.177 & 0.123 \\
& & $\varphi_{\mathrm{BFN}}$ & 0.954 & 0.887 & 0.773 & 0.615 & 0.329 \\
\midrule

$(1.8,1,1,1,1)$ & 2
& $\varphi_n$ & 0.112 & 0.064 & 0.071 & \textbf{0.040} & 0.068 \\
& & $\tilde{\varphi}_n$ & 0.283 & 0.160 & 0.130 & 0.083 & 0.075 \\
& & $\varphi_{\mathrm{BFN}}$ & 0.809 & 0.613 & 0.399 & 0.195 & 0.086 \\
\cline{2-8}
& 10
& $\varphi_n$ & 0.159 & 0.116 & 0.074 & 0.057 & 0.069 \\
& & $\tilde{\varphi}_n$ & 0.473 & 0.274 & 0.160 & 0.101 & 0.084 \\
& & $\varphi_{\mathrm{BFN}}$ & 0.916 & 0.743 & 0.508 & 0.258 & 0.106 \\
\midrule

$(1.2,1,1,1,1)$ & 2
& $\varphi_n$ & 0.207 & 0.223 & 0.199 & 0.185 & 0.118 \\
& & $\tilde{\varphi}_n$ & 0.492 & 0.460 & 0.401 & 0.327 & 0.213 \\
& & $\varphi_{\mathrm{BFN}}$ & 0.936 & 0.944 & 0.910 & 0.848 & 0.673 \\
\cline{2-8}
& 10
& $\varphi_n$ & 0.312 & 0.258 & 0.207 & 0.174 & 0.120 \\
& & $\tilde{\varphi}_n$ & 0.665 & 0.525 & 0.464 & 0.348 & 0.224 \\
& & $\varphi_{\mathrm{BFN}}$ & 0.988 & 0.957 & 0.945 & 0.892 & 0.741 \\
\bottomrule
\end{tabular}
\end{table}

\end{document}